\algnewcommand\algorithmicswitch{\textbf{switch}}
\algnewcommand\algorithmiccase{\textbf{case}}
 \newtheorem{xtheorem}{Theorem}
\newtheorem{xdefinition}[xtheorem]{Definition}
\newtheorem{xobservation}[xtheorem]{Observation}
\newtheorem{xlemma}[xtheorem]{Lemma}
\newtheorem{xproposition}[xtheorem]{Proposition}
\newtheorem{xcorollary}[xtheorem]{Corollary}
{\hspace*{\fill}\raisebox{-1pt}{\boldmath$\Box$}\end{xdefinition}}
\newenvironment{observation}{\begin{xobservation}\rm}{\end{xobservation}}
\newenvironment{theorem}{\begin{xtheorem}\rm}{\end{xtheorem}}
\newenvironment{lemma}{\begin{xlemma}\rm}{\end{xlemma}}
\newenvironment{corollary}{\begin{xcorollary}\rm}{\end{xcorollary}}
\newenvironment{proof}{\begin{trivlist}\item[]{\bf Proof }}%
{\hspace*{\fill}\raisebox{-1pt}{\boldmath$\Box$}\end{trivlist}}
\newcommand{\initalg}{Initialization Algorithm\xspace}
\newcommand{\para}{Parallel Template\xspace}
\newcommand{\miseta}{\ensuremath{\eta_1}}
\newcommand{\good}{reasonable\xspace}
\newcommand{\VC}{\mathit{\tau}\xspace}
\newcommand{\IS}{\mathit{\alpha}\xspace}
\begin{document}
\title{Distributed Graph Algorithms with Predictions\,\footnote{A brief announcement of this work was published in Proceedings of the ACM Symposium on Principles of Distributed Computing (PODC), 2025, pages 322--325.}}
  \author[1]{Joan Boyar}
  \author[2]{Faith Ellen}
  \author[1]{Kim S. Larsen}
 \affil[1]{University of Southern Denmark}
\affil[2]{University of Toronto}
\date{}

\maketitle

\begin{abstract}
We initiate the study of deterministic distributed graph algorithms
with predictions in synchronous message passing systems.  The process
at each node in the graph is given a prediction, which is some extra
information about the problem instance that may be incorrect.  The
processes may use the predictions to help them solve the problem.  The
overall goal is to develop algorithms that both work faster when
predictions are good and do not work much worse than algorithms
without predictions when predictions are bad.  Concepts from the more
general area of algorithms with predictions, such as error measures,
consistency, robustness, and smoothness, are adapted to distributed
graph algorithms with predictions.

We consider algorithms with predictions for distributed graph problems,
where each node is given a prediction for its output.
We present a framework for evaluating distributed graph algorithms with
predictions and methods
for transforming existing algorithms without predictions
to effectively use predictions. 
Our approach is illustrated by developing algorithms with predictions
for the Maximal Independent Set problem.

We also include a discussion of error measures and demonstrate how fine-tuning an error measure towards a particular problem can yield stronger results about the performance of algorithms for that problem.
\end{abstract}

\newpage

\section{Algorithms with Predictions}

The study of algorithms with predictions is a very active research area, originally
largely inspired by Lykouris and Vassilvitskii~\cite{LV18,LV21}.  Like
most work in this area, they considered online algorithms with
predictions.  However, the idea of using predictions, possibly
generated using machine learning, to improve the performance of
algorithms has been applied to other settings, including sequential
algorithms and data structures, dynamic graph algorithms, algorithmic
game theory, and differential privacy~\cite{ALPS}.  Here, we initiate
the study of deterministic distributed graph algorithms
with predictions in synchronous message passing systems.

The quality of an algorithm with predictions for a problem is evaluated
using the same performance measure
as an algorithm without predictions
for that problem.
For online algorithms with predictions, the performance measure is
generally the competitive ratio, comparing the quality of the
solutions obtained by an online algorithm to those of an optimal
offline algorithm.
For synchronous distributed algorithms, the
performance measure we use is the number of rounds until all
processes terminate.

An algorithm with predictions receives information in addition to a problem's input.
This information may come from a machine learning oracle or some other source that
is treated as a black box. It may help an algorithm solve
the instance by compensating for a lack of knowledge; for example, about
future requests in an online setting or future changes in a dynamic setting,
global information in a distributed
setting, or information that would take too long to compute by an
offline or sequential algorithm.
For example, for the Maximal Independent Set problem, each node 
could 
receive a bit indicating 
whether or not it belongs to a set,
which is predicted to be a maximal independent set of the input graph.
Different predictions are given for different problems.
One can also consider different types of predictions for the same problem.

Predictions may have errors.  An \emph{error measure} is a function
that takes a problem instance and predictions
and produces a value that describes the amount of error in the 
predictions.
In  machine learning and related work, this is often called a loss function.
An algorithm with predictions should 
perform very well when the predictions are good and never
perform significantly worse than a good algorithm without predictions.
It is also a goal to have the degradation in performance not grow too fast as the error in the predictions grows.

\subsection{Distributed Graph Problems with Predictions}
\label{introdistrpred}

An instance of a distributed graph problem is a graph, where
each  node of the graph is represented by a different process.
The outputs of the processes must jointly describe a solution to the
problem.

For example,
in the Maximal Independent Set problem, each process must output a bit 
so that the set of nodes 
represented by
the processes that output 1 is
a maximal independent set of the graph.

In general, predictions
can be any kind of information which is given to the processes at the beginning of the algorithm.
Different predictions
can be given to different processes or all processes could be given the 
same predictions.
In this paper, 
each process is given a prediction of its output for the instance.
Jointly, these predictions constitute a predicted solution,
which is not necessarily a correct solution to the problem instance.
For example, in the Maximal Independent Set problem, 
predictions consist of a bit for each process,
where 1 says that its node is predicted to be in the maximal independent set
and  0 says that it is not.
There are two possible types of errors
that can occur in the predictions
for this problem: 
two adjacent nodes could both receive 1, so the predictions do
not describe an independent set,
or a node and all of its neighbors could receive 0, so the 
predicted set is not maximal.

We consider a synchronous message passing model and look at
four standard graph problems, 
Maximal Independent Set, Maximal Matching,
$(\Delta+1)$-Vertex Coloring and $(2\Delta-1)$-Edge Coloring,
where $\Delta$ is the maximum degree of any node in the graph.

An example of where an algorithm with predictions for Maximal Independent Set (or another problem) may be useful is when 
a maximal independent set has been computed on one network, but now a related network 
is being used. It might have the same set of nodes, but a slightly different set of edges or some nodes (and their adjacent
edges) may have been added or removed.

\subsection{Measuring the Quality of Algorithms with Predictions}
\label{intro-error-measures}
In our paper,
an error measure is a function, $\eta$, that maps a problem instance and predictions for that instance to a 
non-negative integer, called the \emph{prediction error}.
When the problem instance and the predictions are understood, we
use $\eta$ to denote the prediction error.
The error measure should intuitively represent the amount of error in the predictions
for that problem instance and the performance of algorithms for that problem should be measured
as a function of the prediction error.
  
Defining an appropriate error measure
for a problem is often a challenge 
when considering problems with predictions.
An error measure should reflect how far from being correct the predictions are,
with a value of zero reflecting no prediction error. 
It is also important to carefully define an error measure so that it is possible
to differentiate between the performance of good and bad algorithms.
If an error measure returns overly large values,
then all algorithms may perform well
as a function of the prediction error,
whereas, if it returns too small values, then all algorithms may
perform poorly.
The error measure is part of the definition of a problem with predictions.

For problems where the correct predictions are unique,
the quality of arbitrary
predictions can be evaluated by comparing
the predictions to the correct predictions.
For example, a graph with distinct edge weights has a unique minimum spanning tree, so the quality of the predictions could be measured by the number of edges incorrectly predicted to belong or not belong to the spanning tree.
If a graph can have duplicate edge weights, it may have multiple minimum spanning trees, and this method of assessing the quality of the predictions cannot be used.

For the distributed graph problems we consider in this paper,
the correct predictions are not unique.
However, these problems have the nice property that certain partial solutions to
a problem instance can be extended to one or more complete solutions.
Moreover, determining a part of the predictions that form
such a partial
solution to an instance can be computed locally, in a small constant number of rounds.
In particular, an algorithm can determine, in a small constant number of rounds,
whether the predictions for a particular problem instance has no error.

Once such a partial solution has been determined,
the subgraph induced by the nodes in the graph
that have not yet output values
may consist of a number of different components.
Extending the partial solution to a complete solution can be done separately
in each component.
Thus, error measures that are functions of the maximum number of nodes, number of edges, or diameter
of these components, rather than the sum of them over all components,
seem more appropriate for these distributed graph problems.
The worst-case round complexities of our distributed graph algorithms with predictions
are at most small constants times the number of 
rounds taken by existing good distributed algorithms
without predictions
for each problem instance, no matter how bad the
predictions are.

Inspired by developments in online algorithms,
we propose a collection of definitions for formalizing the notions
of good performance of algorithms with predictions in distributed computing.
Consider an algorithm, $A$, that solves a 
graph
problem despite possible errors in the predictions
and an algorithm, $R$, without predictions that solves the same problem.
Let $n$ denote the number of vertices in a graph and let $\eta$ denote the prediction error.
Suppose each process has a distinct identifier from the set $\{1,\ldots,d\}$, where $d \in n^{O(1)}$.
\begin{itemize}
\item
 $A$ has \emph{consistency} $c(n)$ if $A$ terminates in $c(n)$ rounds when $\eta = 0$.
$A$ is \emph{consistent} if
 $c(n)$ is at most a constant times the optimal number of rounds needed to check whether a predicted solution is correct.
\item
$A$ is $f(\eta)$-\emph{degrading} if it terminates within $f(\eta) + c(n) + O(1)$ rounds, where $c(n)$ is the consistency of $A$.
\item
$A$ is \emph{robust with respect to $R$}
if it terminates within $O(g(n))$ rounds,
where $g(n)$ is the round complexity of $R$.
\item
$A$ is \emph{robust} if it is robust with respect to $R$, and $R$ is 
an asymptotically optimal algorithm for the problem without predictions.
\item
$A$ is \emph{smooth with respect to $R$} if it is consistent, robust
with respect to $R$, and $f(\eta)$-degrading, where $f$ is a
function that does not grow too quickly.
\item
$A$ is \emph{smooth} if it is consistent, robust, and $f(\eta)$-degrading,
  where $f$ is a function that does not grow too quickly.
\end{itemize}

In the examples considered in this paper, $c(n)$ is constant
and $f(\eta)$ is in $O(\eta)$.
Sometimes, the maximum degree $\Delta$ of the graph and/or the value $d$ is included as a parameter of the functions $c$, $f$, and $g$.

Note that consistency and robustness are standard terms used in the
online algorithms with predictions literature, so we use them here, even
though they have different meanings in the theory of distributed
computing.
In other contexts, degradation is seldom used and, like here, smoothness is only defined informally.

\subsection{Related Work in Distributed Computing}
 
Distributed algorithms with predictions have been considered before in
a very different setting by Gilbert, Newport, Vaidya, and Weaver
in~\cite{GNVW21} for contention resolution.  In their model, the
algorithm is provided with a distribution that predicts the likelihood
of each network size occurring, and they express the performance 
of an algorithm with
respect to the entropy and the statistical divergences between the
predicted and actual distributions.
Mitzenmacher and Dell'Amico~\cite{MD22} consider
predictions for queuing in distributed systems using simulation-based studies.

Starting from our work, Ben-David, Dzulfikar, Ellen and Gilbert~\cite{BDEG25} studied 
Byzantine agreement with predictions in synchronous message passing systems.
In their model, every honest process is given a (possibly different) prediction about which other processes are faulty.
They proved lower bounds on the round complexity and message complexity as a function of the number of 
incorrect prediction bits, the number of processes, and the number of faulty processes. 
They also presented deterministic algorithms with asymptotically optimal round complexity in both an
unauthenticated system and an authenticated system with signatures.

Algorithms with advice~\cite{FIP10, D09, HKK10, EFKR11, BKKKM17, BFKLM17j}
are provided with some additional \emph{correct} information that improves their performance.
This information is chosen by the algorithm designer.
The goal is to understand how much improvement is possible for a given amount of information
or how much information must be given to significantly improve performance.
Fraigniaud, Ilcinkas, and Pelc~\cite{FIP10} proved asymptotically tight bounds on the total number of bits of advice
needed to be given to processes to perform broadcast and wakeup
using $O(n)$ messages in a graph of size $n$.
Together with Gavoille~\cite{FGIP09}, they
proved lower bounds on the number
of nodes in a cycle or a tree that must be given information 
to asymptotically decrease the number of rounds for a deterministic distributed algorithm to compute a 3-coloring.
Fraigniaud, Korman, and Lebhar~\cite{FKL10} designed a deterministic distributed algorithm that computes a minimum spanning tree of
any weighted graph with $n$ nodes in $O(\log n)$ rounds using a constant number of bits of advice per node.
Ellen, Gorain, Miller, and Pelc~\cite{EGMP21} and Ellen and Gilbert~\cite{EG20} show that there are deterministic algorithms
for broadcasting in synchronous radio networks where nodes do not have identifiers, but are given a small constant number of bits of advice.
Very recently,  Balliu, Brandt, Kuhn, Nowicki, Olivetti, Rotenberg and Suomela~\cite{BBKNORS25} gave deterministic distributed algorithms
to find a 3-coloring of any 3-colorable graph and to find  a $\Delta$-coloring of any $\Delta$-colorable graph with maximum degree $\Delta$
using 1 bit of advice per node with round complexity that is a function of only $\Delta$.

A \emph{locally verifiable graph problem} is a problem whose solution,  distributed among the nodes, can be checked in a constant
number of rounds. This means that, if the solution is correct, all nodes must  accept it and, if it is not satisfied,
then at least one node must reject it. 
A {\em locally checkable proof} was defined by G\"o\"os and Suomela~\cite{GoosS16}.
It is advice that can be given to each node of a graph, so that checking whether
the graph satisfies a certain graph property becomes a locally verifiable problem.
They were concerned with the number of bits of advice that are needed for various graph properties.
A \emph{proof labeling scheme}, introduced by Korman, Kutten, and Peleg~\cite{KKP10}, is a special case of a locally checkable proof, where the checking algorithm at each node can only make use of 
the local state of the node, its advice, and the advice of its neighbors.
Balliu, Hirvonen, Melnyk,  Olivetti, Rybicki and Suomela~\cite{BHMORS22} consider partial solutions to 
 locally verifiable graph problems. They are concerned with how large a neighborhood of a node may need to be modified
 so that the resulting partial solution also includes the output value at this node.
 They call this the \emph{mending radius}.

There are also some research areas within the theory of distributed computing that deal with
incorrect information, but address different questions than we do.
Self-stabilizing algorithms~\cite{Dijkstra,Dolev2000,ADDP19} 
start with nodes in arbitrary states and their goal is to ensure that the states
eventually satisfy a certain property, for example, all adjacent nodes have 
different colors.
The designers of these algorithms often strive to minimize
the size of the state at each node. They are not concerned with
improving the round complexity from configurations that are close to being safe.

Kutten and Peleg~\cite{KP99,KP00} consider the solution to a distributed graph problem
which has been changed at some nodes, so that it is no longer correct.
They want to mend the solution in a number of rounds that is a function only
of the number of nodes that have been changed.
In order to do this, they store redundant information in an auxiliary data structure at each node, typically about the
parts of  the solution at its neighbors and use local voting to restore parts of the solution.

\subsection{Outline and Contributions}
In Section~\ref{section-model}, we
present the computational model.
The maximal independent set problem
is formally defined in  Section~\ref{section-graphs}.
There, we also introduce the concept of an extendable partial solution.
Then, we set up our framework,
using the Maximal Independent Set problem as a
running example.
In Section~\ref{initialization}, we introduce initialization and base
algorithms, which are used to provide consistency.
In Section~\ref{error-measures}, we 
discuss criteria for good error measures and give some examples of different error measures for the Maximal Independent Set problem.
As the final part of the set-up, in Section~\ref{sec-css}, we
define measure-uniform algorithms, which are 
used to obtain algorithms with good degradation.

In Section~\ref{template}, we present four templates
that take and sometimes slightly modify
existing algorithms for graph problems
to produce algorithms with predictions. 
Some of the templates require that the algorithms have certain properties, which we exploit to get
better algorithms with predictions.

Other graph problems with predictions are considered in Section~\ref{section-other}.
Then, in Section~\ref{black-white}, we consider the Maximal Independent Set problem,
but using a different error measure, and then restrict it to rooted trees.
Finally, in Section~\ref{section-open}, we list open problems.

\section{Computational Model}
\label{section-model}

A graph $G = (V,E)$ consists of a set of  nodes, $V \subseteq \{1,\ldots,d\}$,
and a set of edges $E \subseteq \{ \{u,v\}\ |\ u \neq v \in V\}$. 
The number of nodes in $V$ is denoted by $n$ and 
$d$ is an upper bound on the largest identifier
of any node in the graph.

Each node of the graph is a nonfaulty process that may communicate with its neighbors,
those nodes with which it shares an edge.
The number of neighbors of a node is called its degree.
Each node is assumed to know 
its identifier and the identifiers of its neighbors, as well as the values $n$ and~$d$.  
Sometimes 
they
are assumed to know additional information about the graph,
for example, $\Delta$, which is 
the maximum degree of any node in the graph.

Each node $i$
has a local  variable $x_i$, which is a prediction of its part of the solution.  
It also
has a local output variable, $y_i$, (or possibly multiple output variables) used to indicate its part of the
solution to the problem at the end of 
the 
computation.  Initially, the outputs are undefined.
Immediately after node $i$
has assigned values to all its output variables, it terminates.
A node that has not terminated is called \emph{active}.
Computation is synchronous: In each \emph{round}, 
each active node can send a possibly different message to each of its neighbors, receive all
messages sent to it that round from all of its neighbors, do some
computation and update its state, optionally 
assign a value to its local output, and terminate if this is the node's
last output. 
An algorithm terminates when every node has terminated.
In the LOCAL model~\cite{L92}, a message can be arbitrarily long, while in the CONGEST
model, each message is restricted to be  $O(\log n)$ bits long~\cite{Peleg2000}.
Although some of our algorithms will work in the CONGEST model, we 
use the LOCAL model, for simplicity.
All algorithms are assumed to be deterministic.

\section{The Maximal Independent Set Problem and Extendable Partial Solutions}
\label{section-graphs}

In this section, we consider the maximal independent set problem without predictions
and introduce the concept of extendable partial solutions.

An \emph{independent} set of a graph is a subset $I$ of its nodes
such that no two nodes in $I$ are neighbors in the graph.
It is \emph{maximal} if it is contained in no other independent set.
In the \emph{Maximal Independent Set (MIS) problem}, 
each node $i$ must output a bit $y_i$ such that 
no neighbor of a node that outputs 1 also outputs 1
and every node that outputs 0 has a neighbor that outputs 1.
At the end of the computation, 
the nodes that have output 1 must be
a maximal independent set of the graph.

The outputs of some nodes 
form a \emph{partial solution} to a graph problem
if they are a solution to the problem on the subgraph induced by the specified
nodes or edges. 
If a partial solution together with \emph{any} solution to the problem on
the remainder of the graph is a solution to the
problem on the entire graph, we say that this partial solution is {\em
  extendable}. In particular, a solution to the entire graph problem is
trivially an extendable partial solution.

For example, if $I$ is any independent set of a graph, $G$, then the partial
solution in which all nodes in $I$ output 1 and all their neighbors
output 0 is an extendable partial solution. This holds because if
$I'$ is any maximal independent set of the  graph
obtained by removing $I$ and its neighbors from $G$, then $I' \cup I$ is 
a maximal independent set of $G$~\cite{L86}. Conversely, suppose a partial
solution has a node $v$ with output~$1$, but it does not have output~$0$ for
some neighbor $u$ of $v$ in the original graph. Then, this partial solution
is not extendable, because there exists a maximal independent set in the
remainder of the graph that contains $u$.

A related concept is a \emph{greedily completable problem}~\cite{BHMORS22}. It
is a problem such that any partial solution can be extended to a complete solution in a greedy manner, 
considering the nodes in an arbitrary order.
Note that it is not necessary to know an extendable partial solution  to extend it to a complete solution.
However, for a greedily completable problem, knowledge of the partial solution may be used to extend it to
a complete solution. 
Thus, a greedily completable problem may not necessarily have extendable partial solutions.

\section{Initialization Algorithms}
\label{initialization}
To ensure consistency in a distributed graph algorithm with predictions,
each node begins by examining the predictions in its local neighborhood
to determine whether they are consistent with a correct solution.
If so, the node outputs its prediction; otherwise,
it might output some values which can be different from
its prediction. If a node has output its part of the solution, it terminates.
Thus, if the predictions form a correct solution, all nodes terminate.
We require that the set of outputs of all nodes forms an
extendable partial solution.
We call this initial part of a distributed graph algorithm with predictions
an \emph{initialization algorithm}.
The nodes that have not terminated
proceed to compute a solution on the remaining graph. 

For example, there is a simple 3 round initialization algorithm
for MIS, which we call the \emph{MIS Base Algorithm}:
In round 1, nodes exchange their predictions with their neighbors. The nodes that have prediction 1 
and whose neighbors 
have prediction 0 form an independent set $I$. In round 2, every node in $I$ notifies its neighbors, outputs 1, and terminates.
In round 3, every node that is a neighbor of a node in $I$ (and hence has
prediction~0) notifies its neighbors, outputs 0, and terminates.
If the predictions are
correct, then this algorithm outputs the
prediction for every node. Hence, 
any algorithm with predictions that begins with the MIS Base Algorithm
 is consistent.

The extendable partial solution computed by the MIS Base Algorithm has the additional property that every node that outputs a value outputs its prediction.
  Korman, Sereni, and Viennot~\cite{KSV13} called such an algorithm a
  \emph{pruning algorithm} and gave the MIS Base Algorithm as an example.
  In our context,
  they are initialization algorithms where
  the outputs at nodes are the same as the predictions at these nodes.
Consider any node $v$ that outputs~$1$ in an extendable
partial solution produced
by a pruning algorithm for the MIS problem. 
As mentioned earlier, all of its neighbors output~$0$ in this solution.
Furthermore, all outputs for $v$ and
its neighbors agree with the prediction. Therefore, the MIS Base Algorithm
outputs~$1$ for node $v$. Hence, among all pruning algorithms, the
MIS Base Algorithm outputs the largest independent set. However, some
pruning algorithms may produce more outputs than the MIS Base Algorithm.
For example, consider a graph consisting of a line with three nodes, which
are all given prediction~$0$. The partial solution produced by the MIS
Base Algorithm is empty, but the algorithm where the middle node outputs~$0$
is a pruning algorithm.

More generally, a \emph{base algorithm} is an initialization algorithm
which is a simple pruning algorithm for
a problem. It is defined as part of the problem definition.
An \emph{error component}
is a component of the subgraph induced by the nodes that 
would be active after running the base algorithm.
In applications where the outputs are associated with the edges of the graph, instead of its nodes, the error components are the components of the subgraph induced by the
edges whose outputs would not be determined after running the base algorithm.

When the predictions for the entire graph are not completely correct, an
initialization algorithm
is allowed to output values other than the prediction at some nodes.
Because the partial solution computed by an
initialization algorithm is extendable, computing a
solution separately on each component will lead to a correct overall solution.

Another initialization algorithm for the MIS problem, which is not a pruning algorithm,
can be obtained by changing the definition of 
the set $I$ in the MIS Base Algorithm
to be the set of all nodes with prediction 1 whose neighbors with prediction 1 (if any) all have smaller identifiers. We call this the
\emph{MIS Initialization Algorithm}.
The  extendable partial solution obtained by the MIS Initialization
Algorithm always contains the one produced
by the MIS Base Algorithm.

We say an initialization algorithm for the problem is \emph{\good}
if it always produces an extendable
partial solution
that contains the extended partial solution produced by the base algorithm 
and its asymptotic round complexity is the same as the base algorithm.
The MIS Initialization Algorithm is an example of a \good initialization algorithm,
which has the same round complexity as the MIS Base Algorithm.
We use that algorithm for initialization
when we create examples of algorithms with predictions from the templates.
We emphasize that it is important to have a base algorithm for each problem,
from which the error components can be defined.
Based on those error components, we can then define one or more error measures.
Given an error measure, we can then compare
the performance of different algorithms with predictions for that problem,
even if they start with different initialization algorithms.  

\section{Error Measures}
\label{error-measures}

Our goal is to be able to compare different distributed algorithms with
predictions for a given problem, analyzing the round complexity
as a function of the prediction error. There are various ways to define error
measures for distributed graph algorithms with predictions.
In this
section, we discuss a number of possibilities.

A seemingly natural error measure could be defined by letting the error measure, $\eta_H$, be the minimum number of predictions that would have to be
changed to obtain a correct solution. 
For example, for the MIS problem,
this would be the minimum Hamming distance between the characteristic
vector of the prediction
 and the characteristic vectors of the maximal independent sets of the graph.
This error measure is global, rather than local:
if there are multiple error components, it 
counts the total number of errors in all of them,
rather than just considering the component with the 
largest number of errors. This does not reflect the fact that nodes
in these different 
error components can work independently to find solutions for their components.
The sum of the sizes of the error components (rather than the maximum of their sizes)
is also a global error measure, so it suffers from the same weakness as $\eta_H$.

Our approach is to define a measure, $\mu$, on graphs
and define an error measure, $\eta$, by taking the maximum 
of $\mu$ over all error components. These error measures depend on both the graph and the predictions.
Note that
the components of the subgraph induced by the active nodes at the end of
different initialization algorithms may be different than the error components.
Using \good initialization algorithms, each of those components will be a subgraph of an error component.

Our first measure, $\mu_1$, is the number of nodes in the 
graph
and our
first error measure, $\eta_1$, is $\max\{ \mu_1(S)\ |\ S$ is an error component\}.
Note that, for the MIS problem, if a node and one or more of its neighbors in the original graph all have prediction 1,
then these nodes will all be in the same error component.
If a node and all of its neighbors in the graph have prediction 0, then that node will be in some error component.
When the predictions are correct, there are no error components and the prediction error is zero.
Although the base algorithm is used to define this error measure,
an algorithm with predictions can begin with another \good initialization algorithm, instead.

Im, Kumar, Qaem, and Purohit~\cite{IKQP21,IKQP21a} recommended that the value of an error measure 
should not increase when errors are removed. 
In our setting, 
if the subgraph induced by the set of nodes that are active at the end of the base algorithm for one set of predictions
is contained in the subgraph induced by the set of nodes that are active at the end of the base algorithm for another set of predictions,
then the value of the error measure for the first set of predictions
should be at most the value of the error measure for the second 
set of predictions.
This follows whenever the measure $\mu$ from which $\eta$ is defined is \emph{monotone}:
If $S$ is a component of an induced subgraph of a connected graph $G$, then  $\mu(S) \leq \mu(G)$.
Note that $\mu_1$ is monotone.

Another measure of a connected graph
is its diameter.
This seems useful, especially in the LOCAL model,
since a solution for any connected graph 
can be computed in a number of rounds bounded by its
diameter (by collecting the adjacency lists of every reachable node
and then locally computing the solution).
Because a tree is an acyclic connected graph,
the components of any induced subgraph of a tree
have diameter no larger than the diameter of the tree.
Thus, this measure is monotone for trees.
(Results for rooted trees are discussed in Section~\ref{mis-tree}.)
However, it is not monotone for general graphs, since
there are connected graphs that have induced subgraphs with components of much larger diameter.
For example, consider $F_k$, the
wheel with $k$ nodes on the rim and an additional node on each
spoke, illustrated in Figure~\ref{fig:F8}.
The diameter of $F_k$
is $4$, since the path from any node to the center node has length
at most~$2$.
The subgraph induced by the $k$ nodes on the rim has diameter $\lfloor k/2 \rfloor$.
For the maximal independent set problem with predictions, this subgraph can be an error component,
for example, when the center node has prediction 1 and the remaining nodes have prediction 0.
When all nodes have prediction 1, which is a worse prediction, the entire graph is an error component,
but it has smaller diameter.
Thus, the maximum of the diameter of every error component 
should not be
used as an error measure for general graphs.

\begin{figure}[!htb]
\centering
\includegraphics{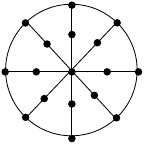}
\caption{The graph $F_8$}
\label{fig:F8}
\end{figure}

In summary, an error measure is required to be the maximum of some
monotone measure of each of the error components. 
Consequently,
if $\eta'$ was computed in the same way as $\eta$, but from the components 
remaining after
running a \good initialization algorithm instead of after running the base
algorithm, then $\eta'\leq \eta$.

For the Maximal Independent Set problem with predictions,
we define another monotone measure and, from it, another error measure.
A \emph{vertex cover} of a graph is a subset of nodes that contains at least one endpoint of
every edge.
It follows that the complement of a vertex cover is an independent set and vice versa.
For any graph, $G$, let $\IS(G)$ be the size of a maximum independent set of $G$
and let  $\VC(G)$ be the size of a minimum vertex cover of $G$.
The closest correct solution in
 an error component, $S$,  in which all nodes have prediction 1
 is a maximum independent set of $S$.
 The nodes with incorrect predictions relative to this solution
 is a minimum vertex cover of $S$. 
 In this case, $\VC(S)$ is the number of incorrect predictions.
The size  of a maximum independent set of an error component
is an upper bound on the size of the maximal independent set
of that component that is found by an algorithm.
Thus, if a node is added to the independent set every constant number of rounds of some algorithm,
then the algorithm performs $O(\IS(S))$ rounds on any error component $S$.
For any graph $G$, let
$$\mu_2 (G) = 2 \min\{ \IS(G), \VC(G)\} \mbox{ and let }  \eta_2 = \max\{ \mu_2(S)\ |\ S \text{ is an error component} \}.$$
Since the complement of the set of nodes in an independent set of a graph is a vertex cover of that graph,
for any component, $S$,   at least one
of $\IS(S)$ and $\VC(S )$ is at most half the number of nodes in $S$.
Hence,
\[\mu_2 \leq \mu_1,\]
and,
for any instance of maximal independent set with predictions,
\[\eta_2 \leq \eta_1.\]
Therefore, showing that an algorithm is $\eta_2$-degrading is at least
as strong as showing that it is $\miseta$-degrading.
Note that, when $G$ is a clique, $\IS(G) = 1$ and, when $G$ is a star, $\VC(G) = 1$, so
$\mu_2$ can be significantly smaller than $\mu_1$ and
$\eta_2$ can be significantly smaller than $\miseta$.

Another possible error measure for the MIS problem is based
on a different definition of an error component.
A {\em black (white) component} is a component of the graph induced by the nodes
that have prediction 1 (0) and were active 
at the end of the base algorithm.
We define the error measure, $\eta_{bw}$,
to be the number of nodes in the largest black or white component
that would remain in the graph
if the base algorithm were run.
Note that, for any instance, the prediction error, $\eta_{bw}$, is
bounded above by $\miseta$.
In some instances, for example
when all nodes receive the same prediction, $\eta_{bw} = \miseta$.
However, there are interesting instances where the value of $\eta_{bw}$ is much smaller than the value of $\miseta$.
This can lead to faster algorithms.
For example, consider a two-dimensional grid with $n$ nodes,
where the nodes with coordinates in 
$\{(i,j)\ |\ i,j  \bmod 4 \in\{ 0,1\}\} \cup  \{(i,j)\ |\ i,j  \bmod 4 \in\{ 2,3\}\}$
are colored black and the remaining nodes are colored white. This is illustrated in Figure~\ref{fig:grid}.
For this instance, $\miseta =n$, but $\eta_{bw} = 4$.
In general, splitting the error
components into smaller ones based on the predictions the nodes receive is a symmetry breaking mechanism.
We give examples 
of MIS algorithms with predictions that use
this error measure in Section~\ref{black-white}.
\begin{figure}[!htb]
\centering
\includegraphics[scale=0.6]{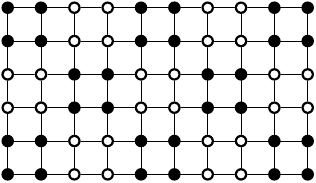}
\caption{The black and white components of an instance of the grid graph}
\label{fig:grid}
\end{figure}

\subsection{Error Measures Used for Graph Problems in Other Settings}
Graph problems with predictions have been considered in order to improve the
running time of classic and dynamic graph algorithms, and the competitive ratio of online graph algorithms.

The running times of algorithms for various graph problems have been
improved using predictions.
Chen, Silwal, Vakilian and Zhang~\cite{CSVZ22} consider problems
where predicting the dual solution is helpful. For different problems, they
consider error measures based on  different  $L^p$-norms of the difference between the
predicted dual solution
and an optimal dual solution. 

Moseley, Niaparast and Singh~\cite{MNS25} consider predictions for the Global Minimum Cut problem that indicate whether each edge is in the minimum cut. They compare the predictions 
to an arbitrary minimum cut, using two error measures together,
one error measure that counts the number of edges that are
incorrectly predicted to be in this minimum cut and another error measure that counts the number
of edges that are incorrectly predicted to not be in this minimum cut.
In both cases, the error measure is normalized by the weight of the minimum cut.
They also consider predictions that give the probability that each edge is in the minimum cut.
The two error measures they use in this setting are natural extensions of their other error measures.

Max-flow has been considered with different error measures, 
where the predictions are
 always an amount of flow for each edge. In the papers 
by Davies, Moseley, Vassilvitskii and Wang~\cite{DMVW23} and Polak and Zub~\cite{PZ24},
the error measure is the minimum $L^1$ norm of the difference between the predicted flow, which may not be feasible, and a closest maximum flow.
In another paper, by Davies, Vassilvitskii and Wang~\cite{DVW24},
the error measure considers both the minimum amount of flow needed to be added to the prediction to saturate some $s-t$ cut and how far the prediction is from being a feasible flow
(by summing the absolute value of the difference between the incoming and outgoing flow at each vertex).

Predictions have also been considered for dynamic graph algorithms, especially incremental graph algorithms.
Incremental Approximate Shortest Path problems with predictions (among other problems) were considered by van den Brand, Forster, Nazari and
Polak~\cite{BFNP24}, Henzinger, Saha, Seybold and Ye~\cite{HSSY24}, and
McCauley, Moseley, Niaparast, Niaparast, and Singh~\cite{MMNNS25}. In all three papers, one prediction considered is the order of arrivals of edges, and the
error measure is the maximum, over all edges, of the absolute value of the difference between its predicted and actual position in the order of arrival.
Incremental problems for directed graphs are considered by McCauley, Moseley, Niaparast, and Singh~\cite{MMNS24}. 
The prediction for each vertex is the number of edges that are ancestors of the vertex and the  number that are descendants of the vertex in the final graph.
The error measure is the maximum over all vertices of the sum of the absolute values of the differences between the predicted and actual values of their numbers of ancestors and  descendants.

Antoniadis, Broersma and Meng~\cite{ABM24} considered online graph coloring algorithms with predictions,
where each vertex that arrives has a predicted color.
The error measure they use
is the number of incorrect predictions as compared to a closest
optimal coloring, i.e., a coloring that uses the fewest number of colors.

Online algorithms on metric spaces have been considered, where
the predictions are the locations at which requests will arrive.
Note that this setting applies to graph problems, letting the distance 
between two vertices in the metric space
be the length of the shortest path between them in the graph.
Azar, Panigrahi, and Touitou~\cite{APT22} define an error measure based on 
the minimum cost of a matching between some predicted locations and some actual locations,
as well as the number of predicted and actual locations that were
not included in the matching.
Bernardini, Lindermayr, Marchetti-Spaccamela, Megow, Stougie and Sweering~\cite{BLMMSS22}
define error measures based on 
covering the unexpected actual requests by predicted requests
and 
covering absent predicted requests by actual requests.

Online graph algorithms with predictions have been considered where items
arrive one a time and must be either accepted for or rejected from the final solution.
Boyar, Favrholdt, Kamali and Larsen~\cite{BFKL23} studied the problem of obtaining a matching
of large size, where the items are edges of the graph.
At the beginning, an algorithm is given a prediction of 
which edges will arrive.
Thus, prediction errors are predicted edges that do not arrive in the actual input or arriving edges that were not predicted. 
Their error measure is the maximum size of a matching of the incorrectly predicted edges,
normalized by the maximum size of a matching of all edges.
Berg, Boyar, Favrholdt and Larsen~\cite{BBFL25} and Berg~\cite{B25} consider the 
Bounded-Degree Vertex Cover and the Bounded-Degree Independent Set problems, respectively,
where the items are the vertices (together with the edges adjacent to previously arrived vertices).
Each arriving vertex comes with a prediction
of whether the vertex is in the given set, so should be accepted, or not in the set, so should be rejected.
Two error measures are used with possibly different weights, the number of incorrect predictions of vertices to accept and the number of incorrect
predictions of vertices to reject.

\section{Measure-Uniform Algorithms}
\label{sec-css}

Korman, Sereni and Viennot~\cite{KSV13} call a graph algorithm \emph{uniform} with respect to a graph parameter
if the round complexity of the algorithm depends on the value of that parameter, but the algorithm
does not require that nodes know even an upper bound on the value of  the parameter.
If such an algorithm is run on a subgraph, its round complexity is a function of the
value of the parameter for the subgraph, 
rather than for the entire graph.

A graph algorithm is \emph{measure-uniform} with respect to 
measure $\mu$ if it is uniform with respect to $\mu$
and its round complexity does not depend on the value of any other graph parameters.
For example, if the round complexity of the algorithm depends only on the number of nodes in the graph,
then the round complexity of the algorithm when performed on (a subgraph of) an error component
will be bounded above by a function of the prediction error $\miseta$, which could be much smaller than the
number of nodes in the original graph. 
However, if the round complexity of the algorithm also depends on a parameter such as the size of the domain from which the 
identifiers are chosen or on the largest
identifier, the value of this parameter
may be the same as in the original graph.

A measure-uniform algorithm can be used as part of a graph algorithm with predictions
to separately compute a solution on each error component independently.
We use measure-uniform algorithms in our framework to obtain algorithms with predictions that have good degradation functions.

An example of a measure-uniform algorithm for the MIS problem
is to repeatedly choose an independent set of nodes in the remaining graph, add these nodes
to the independent set, and eliminate them and their neighbors from the graph~\cite{KW85}.
One way to deterministically choose an independent set is to take every node whose identifier
is larger than the identifiers of all its active neighbors. We call this  the  \emph{Greedy MIS Algorithm}.
Pseudo-code is presented in Algorithm~\ref{Identifier_MIS}.
It assumes that each
node knows the identifiers of all its neighbors.

\begin{algorithm}
  \label{Identifier_MIS}
\caption{Greedy MIS Algorithm, code for node $p_i$}\label{greedyMIS}
\begin{algorithmic}[0]
\For {$r = 1,\ldots$}
\State {\bf round} $2r-1$:
\If {all active neighbors of $p_i$ have identifiers smaller than $i$}
\State it notifies all its active neighbors that it is in the independent set,
\State outputs 1, and terminates
\EndIf
\State {\bf round} $2r$:
\If {$p_i$ received a message from a neighbor during round $2r-1$}
\State it notifies all its active neighbors that it is not in the independent set,
\State outputs 0, and terminates
\Else
\For {each neighbor $p_j$ from which $p_i$ received a message during round $2r$}
\State $p_i$ locally records that $p_j$ has terminated
\EndFor
\EndIf
\EndFor
\end{algorithmic}
\end{algorithm}

At the end of each even numbered round, 
a node has output 0 if and only if it is the neighbor
of a node that has output 1.
Thus, the nodes 
that have output 1 form an independent set of the graph and
the values that have been output by the end of each even numbered round form an extendable partial solution.
At the end of the algorithm, when all 
nodes
have terminated,
the set of nodes 
that have output 1
is a maximal independent set of the graph.

\begin{lemma}
\label{greedyMISuppermu1}
When performed on a graph, $G$,
the Greedy MIS Algorithm has round complexity at most
$\max\{\mu_1(S) \ |\ S$ is a component of $G$\} and
 it is measure-uniform with respect to $\mu_1$.
 \end{lemma}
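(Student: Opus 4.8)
The plan is to analyze the Greedy MIS Algorithm independently on each component of $G$ and bound the number of iterations of the outer loop. First I would observe that all communication in the algorithm is between neighbors, so the execution restricted to the nodes of one component $S$ of $G$ is identical to the execution of the algorithm run on $S$ alone; hence it suffices to bound the round complexity on a connected graph $S$ by $\mu_1(S)$, the number of nodes in $S$, and to check that this bound depends on no graph parameter other than $\mu_1$.

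The key step is to show that in every iteration $r$ of the outer loop in which the set of active nodes of $S$ is nonempty, at least one node terminates during rounds $2r-1$ and $2r$. For this, consider the subgraph induced by the currently active nodes of $S$ and take an active node $p_i$ whose identifier is largest among all active nodes (or, more carefully, largest in its current active component); all of its active neighbors then have smaller identifiers, so $p_i$ enters the independent set, outputs $1$, and terminates in round $2r-1$. (If $p_i$ has no active neighbors it simply terminates.) Thus the number of active nodes in $S$ strictly decreases with each iteration, so after at most $\mu_1(S)$ iterations no active nodes remain and the algorithm has terminated on $S$. Since each iteration uses two rounds, this gives a round complexity of at most $2\mu_1(S)$ on $S$; to get the sharper bound $\mu_1(S)$ claimed in the statement I would argue more carefully that whenever a node enters the independent set, all of its active neighbors receive its message and terminate in the same iteration, so that in each iteration the number of active nodes drops by at least the number of newly-selected independent-set nodes plus their neighbors — but since we only need an upper bound of $\max_S \mu_1(S)$ total, it is enough to note that the independent sets chosen in successive iterations partition (a subset of) the nodes that output $1$, and each node that outputs $1$ forces all its neighbors to output $0$, so the total number of iterations is bounded by the size of the final maximal independent set, which is at most $\mu_1(S)$.

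For the measure-uniformity claim, I would simply note that the algorithm's code never refers to $n$, $d$, $\Delta$, the diameter, or any other global quantity: each node uses only its own identifier, the identifiers of its neighbors, and the messages it receives. The loop is unbounded and terminates purely based on local events. Hence the round complexity, as just bounded, is a function of $\mu_1$ of the (component of the) graph on which it is run and of nothing else, which is exactly the definition of measure-uniform with respect to $\mu_1$ recalled in Section~\ref{sec-css}.

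The main obstacle is getting the constant right: the easy argument gives $2\mu_1(S)$ iterations-times-two-rounds only if one is sloppy, and the stated bound is $\mu_1(S)$ rounds, not $2\mu_1(S)$ or $4\mu_1(S)$. The careful accounting — that the nodes selected into the independent set across all iterations, together with their neighbors that are thereby eliminated, are all distinct and total at most $\mu_1(S)$, while each iteration costs effectively at most one "unit" per eliminated node — is where the real work lies; everything else (component decomposition, local-only dependence) is routine.
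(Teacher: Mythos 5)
Your decomposition into components and your handling of measure-uniformity (the code mentions no graph parameters, so the round complexity can only depend on the component being processed) match the paper's proof. The problem is the quantitative step you finally commit to. You bound the number of iterations of the outer loop by the size $|I|$ of the final maximal independent set and note $|I|\leq\mu_1(S)$; but each iteration costs two rounds, so this yields only a bound of $2|I|$ rounds, and $|I|$ can be as large as $\mu_1(S)-1$. Concretely, on a star with $s$ nodes whose center has the smallest identifier, the final independent set consists of all $s-1$ leaves, so $2|I|=2s-2>s=\mu_1(S)$ for $s>2$; hence ``iterations $\leq|I|$'' cannot, by itself, deliver the claimed bound of $\mu_1(S)$ rounds, contrary to your assertion that it ``is enough.'' You flag the constant as the main obstacle at the end, but the accounting you sketch there is not carried out.

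The observation you mention in passing and then discard is exactly the one that closes the gap, and it is the paper's argument: in each odd-numbered round, the maximum-identifier node of each remaining component joins the independent set and terminates, and, if that component still contains at least two nodes, at least one of its active neighbors receives the resulting message and terminates in the following even-numbered round. So every two-round iteration removes at least two nodes from each component of size at least two, while a singleton component finishes in a single (odd) round. By induction on the component size ($T(1)=1$, $T(s)\leq 2+T(s-2)$, using that components can only split), at least one node terminates per round, giving round complexity at most $\mu_1(S)$ on each component $S$.
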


\begin{proof}
During each odd numbered round,
at least one  node
in each component of the remaining graph has an identifier
that is larger than all its neighbors and, hence, terminates.
If a component  contains more than one node, then at least one 
node in the component receives a message
from a neighbor and, hence, terminates during the subsequent even numbered round.
Thus, the Greedy MIS Algorithm has round complexity at most $s$ when
performed on a component, $S$, with $s = \mu_1(S)$ nodes.
Since the code does not mention any graph parameters, it is uniform with respect to $\mu_1$.
Hence, it is measure-uniform with respect to $\mu_1$.
\end{proof}

\begin{lemma}
\label{greedyMISuppermu2}
When performed on a graph, $G$,
the Greedy MIS Algorithm 
has round complexity at most  $\max\{\mu_2(S) +1\ |\ \textrm{ S is a component of } G\}$ and
 it is measure-uniform with respect to $\mu_2$.
\end{lemma}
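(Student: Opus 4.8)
The plan is to fix a connected component $S$ of $G$ and bound the number of rounds the Greedy MIS Algorithm needs before every node of $S$ has terminated; the stated round complexity then follows by maximizing over the components of $G$. I would prove two separate bounds on the number of rounds spent on $S$, namely $2\IS(S)$ and $2\VC(S)+1$, so that the number of rounds on $S$ is at most $\min\{2\IS(S),\,2\VC(S)+1\}$, which is at most $2\min\{\IS(S),\VC(S)\}+1 = \mu_2(S)+1$ (if $\IS(S)\le\VC(S)$ the first term already suffices, and otherwise the second does). Measure-uniformity with respect to $\mu_2$ is then immediate exactly as in the proof of the previous lemma: the code of the Greedy MIS Algorithm mentions no graph parameters, hence it is uniform with respect to $\mu_2$, and the bound just proved shows its round complexity is controlled by $\mu_2$ alone, so it is measure-uniform with respect to $\mu_2$.

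For the bound $2\IS(S)$, I would repeat the counting argument of the previous lemma but count additions to the independent set instead of terminations. In every odd-numbered round $2r-1$ at whose start $S$ still contains an active node, each active component of the remaining part of $S$ contains a node whose identifier exceeds those of all its active neighbors, so at least one node of $S$ outputs $1$ during that round. Since each node outputs at most once and, as observed just before the lemma, the nodes that ever output $1$ form an independent set of the graph, the number of odd rounds at whose start $S$ has an active node is at most $\IS(S)$. Therefore no active node of $S$ remains at the start of round $2\IS(S)+1$, so $S$ is finished by the end of round $2\IS(S)$.

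For the bound $2\VC(S)+1$, fix a minimum vertex cover $C$ of $S$, so $|C|=\VC(S)$. The key claim is that in every phase (a pair of rounds $2r-1,2r$) at whose start the active part of $S$ still contains an edge, at least one previously active node of $C$ terminates. To see this, take a connected component $H$ with at least two nodes in the active part of $S$ at the start of the phase and let $w$ be its maximum-identifier node; then $w$ outputs $1$ in round $2r-1$ and has an active neighbor $x$ in $H$. Since $\{w,x\}$ is an edge and $C$ is a vertex cover, $w\in C$ or $x\in C$; in the first case $w$ terminates in round $2r-1$, and in the second $x$ receives $w$'s message and terminates in round $2r$, proving the claim. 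Because distinct edge-containing phases remove distinct nodes of $C$, there are at most $\VC(S)$ of them, and—since the active part of $S$ only loses edges over time—they form an initial segment $1,\dots,k$ with $k\le\VC(S)$. After round $2k$ the active part of $S$ contains no edges, so any remaining active nodes of $S$ are isolated and each of them outputs $1$ and terminates in round $2k+1$; hence $S$ finishes within $2\VC(S)+1$ rounds.

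I expect the main obstacle to be the vertex-cover argument: carefully locating, in each edge-containing phase, the local-maximum node $w$ together with an active neighbor forced to lie in the fixed cover $C$, and then correctly accounting for the one extra round that may be needed to clear the isolated nodes surviving after all of $C$ has been eliminated. This ``$+1$'' is genuinely necessary, as a path on three nodes in which the middle node has the smallest identifier already illustrates.
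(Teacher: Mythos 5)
Your proof is correct and follows essentially the same route as the paper: both bound the rounds spent on a component $S$ by $2\IS(S)$ and by $2\VC(S)+1$ and take the minimum, with your $\IS$ bound counting the total number of nodes placed in the independent set rather than showing $\IS$ of the remaining graph strictly decreases each phase --- a trivial variation. One small aside: your closing example (a three-node path with the middle identifier smallest) finishes in exactly $2\VC(S)=2$ rounds, so it does not witness the necessity of the ``$+1$''; a single isolated node, with $\mu_2=0$ but one round needed, does.
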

\begin{proof}
Let $S$ be a  component of $G$
and  let $S'$ be the subgraph of $S$ induced by
the  active nodes in $S$ immediately
before some odd numbered round of the Greedy MIS Algorithm.
During the next round,
at least one   node $v$ in $S'$
is chosen to be in the maximal independent set and, in the next
  round, all of $v$'s neighbors in  $S'$
 are excluded from the maximal independent set,
creating an extendable partial solution. 
Let $S''$ be the subgraph induced by the remaining active nodes
in~$S$.

If  $S''$  has an independent set, $M$, of size at least $\IS(S')$,
then $M \cup\{v\}$ is an
independent set in $S'$ of size at least $\IS(S')+1$, which
contradicts the definition of $\IS(G')$.
 Thus, $\IS(S'') < \IS(S')$. 
 Since a nonempty graph has an independent set of size at least 1,
 all nodes of $S$ terminate within $2\IS(S)$ rounds of the Greedy MIS
 Algorithm.
 
 Let $V'$ be a vertex cover of $S$ of size $\VC(S)$.
 If $S'$ contains at least one edge, then at least one node, $v$, with a  neighbor, $u$, in $S'$
  is chosen to be in the maximal independent set.
 At least one of $v$ and $u$ must be in $V'$, so
$V(S'') \cap V' $ is a proper subset of $V(S') \cap V'$.
 Thus,
  after $2\VC(G)$ rounds, all nodes in $V'$ have terminated and their
  edges have been removed, so there are no edges remaining.
In a graph with no edges, all nodes terminate within one round of 
the Greedy MIS Algorithm, so
all nodes of $S$ terminate within $2\VC(S)+1$ rounds of the Greedy MIS Algorithm.

Therefore, the Greedy MIS Algorithm has round complexity at most
the maximum over all components $S$ of the graph of
$\min\{2\IS(S), 2\VC(S) +1\} \leq \mu_2(S)+1$.
Since the code does not mention any graph parameters, it is uniform with respect to $\mu_2$.
\end{proof}

The Greedy MIS algorithm is an asymptotically optimal 
measure-uniform algorithm with respect to
the  measures $\mu_1$ and $\mu_2$.
We establish this result
using the following result from Ramsey Theory~\cite{ramsey,ramseya}.

\begin{theorem}
For all positive integers $K$, $L$, and $M$, there is a positive integer $R(K,L,M)$ such
that for every set $S$
of size at least $R(K,L,M)$ and every function $f$ that maps the $K$-element
subsets of $S$ to $\{ 0,1,\ldots,M-1\}$, there is a subset $S' \subseteq S$
of size at least $L$ such that $f$ has the same value on every $K$-element subset of $S'$.
\label{ramseythm}
\end{theorem}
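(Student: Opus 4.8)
The plan is to prove this classical statement --- the finite Ramsey theorem for $K$-uniform hypergraphs with $M$ colours --- by induction on $K$, using the Erd\H{o}s--Rado extraction argument; I will not try to optimise the resulting bound on $R(K,L,M)$. For the base case $K=1$, the function $f$ simply colours the elements of $S$ with $M$ colours, so by the pigeonhole principle any $S$ with $|S|\ge M(L-1)+1$ contains a colour class of size at least $L$; hence $R(1,L,M):=M(L-1)+1$ works.

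For the inductive step, assume $K\ge 2$ and that $R(K-1,\ell,M)$ exists for every $\ell$. Given $S$ and a colouring $f$ of its $K$-subsets, I would greedily build distinct elements $x_1,x_2,\ldots,x_t$ of $S$, colours $c_1,\ldots,c_t\in\{0,\ldots,M-1\}$, and a decreasing chain of ``live'' sets $S=Y_0\supseteq Y_1\supseteq\cdots$ as follows. Given $Y_{i-1}$, pick any $x_i\in Y_{i-1}$ and consider the colouring $g_i$ of the $(K-1)$-subsets $T$ of $Y_{i-1}\setminus\{x_i\}$ given by $g_i(T)=f(\{x_i\}\cup T)$; by the inductive hypothesis, if $|Y_{i-1}\setminus\{x_i\}|\ge R(K-1,m,M)$ then there is a subset $Y_i\subseteq Y_{i-1}\setminus\{x_i\}$ with $|Y_i|\ge m$ on which $g_i$ is constant, and we let $c_i$ be that constant value. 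Because the $Y_i$ are nested and $x_j\in Y_{j-1}\subseteq Y_i$ for all $j>i$, the following invariant holds: for indices $i_1<i_2<\cdots<i_K$, the set $\{x_{i_2},\ldots,x_{i_K}\}$ is a $(K-1)$-subset of $Y_{i_1}$, so $f(\{x_{i_1},\ldots,x_{i_K}\})=c_{i_1}$. In words, the colour of a $K$-subset of $\{x_1,\ldots,x_t\}$ depends only on its smallest index.

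To finish, run the construction for $t:=M(L-1)+1$ steps. Performing all $t$ steps requires $|S|$ to be at least an integer $N_0$ defined by the backward recursion $N_{t-1}:=1$ and $N_{i-1}:=R(K-1,N_i,M)+1$, chosen precisely so that each shrinking step can be carried out with $|Y_i|\ge N_i$; since $N_0$ is a finite composition of the (finite) numbers $R(K-1,\cdot,M)$, it is finite, and we set $R(K,L,M):=N_0$. Now among $c_1,\ldots,c_t$ some colour $c$ occurs on a set $I$ of at least $\lceil t/M\rceil\ge L$ indices; for any $i_1<\cdots<i_K$ in $I$ we have $f(\{x_{i_1},\ldots,x_{i_K}\})=c_{i_1}=c$, so $S':=\{x_i:i\in I\}$ has size at least $L$ and $f$ is constant on all of its $K$-subsets, as required.

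I expect the only real obstacle to be the bookkeeping: fixing how large each $Y_i$ must be so that both the construction survives all $t$ steps and the inductive hypothesis is applicable at each step, and checking that nestedness of the $Y_i$ genuinely preserves the ``link colour $c_i$'' property through every later shrinking. Everything else is pigeonhole.
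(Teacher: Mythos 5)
The paper does not prove this statement at all: it is the classical finite Ramsey theorem for $K$-uniform hypergraphs with $M$ colours, and the paper simply cites it from the Ramsey-theory literature. Your argument is the standard Erd\H{o}s--Rado/Skolem induction on $K$ and it is correct: the base case is pigeonhole, the nestedness argument ($x_j\in Y_{j-1}\subseteq Y_{i}$ for $j>i$, and $x_{i}\notin Y_{i}$, so the $x_j$ are distinct and every $K$-set's colour is determined by its least index) is sound, and the backward recursion makes $R(K,L,M)$ finite. The only loose end is the one you flagged yourself: with $N_{t-1}:=1$ the last live set may be a single point, so $c_t$ cannot be defined via the inductive hypothesis. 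This is harmless --- $c_t$ is never the colour of any $K$-subset of $S'$, so you may assign it arbitrarily before applying pigeonhole, or equivalently take $t:=M(L-1)+2$ and pigeonhole only on $c_1,\ldots,c_{t-1}$. Either fix is one line, so there is no genuine gap.
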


Consider the graph consisting of $n$ nodes connected in a line, where every node initially knows its own identifier and the identifier of  its left neighbor and
the identifier of its right neighbor, if these neighbors exist. We prove lower bounds on the number of rounds to solve the 
MIS problem on this graph, starting with a lower bound 
for coloring its nodes with at most 3 colors so that no two adjacent nodes get the same color.

\begin{lemma}
Every deterministic 
algorithm for coloring the nodes of an $n$ node line with  3 colors 
requires at least $(n-3)/2$ rounds,
if its round complexity does  not depend on the domain 
 from which the node identifiers are chosen or $d$ is a
  sufficiently large function of $n$.
\label{css-vc}
\end{lemma}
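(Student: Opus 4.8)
The plan is to run a standard indistinguishability argument and close it off with Theorem~\ref{ramseythm}. Suppose, for contradiction, that there is a deterministic algorithm that $3$-colours every $n$-node line and whose round complexity is some $r$ with $r < (n-3)/2$; for now treat $r$ as a function of $n$ only, and deal with the dependence on $d$ at the end. First I would record what a node can learn in $r$ rounds on the line. Since each node initially knows the identifiers of its (at most two) neighbours, after $r$ rounds the node at position $p$ knows the identifiers of all nodes within distance $r+1$ of it; moreover it learns that an endpoint is present only once that endpoint is within distance $r$. Hence, for every \emph{interior} position $p$ with $r+2 \le p \le n-r-1$, the colour output by node $p$ is determined by the ordered $(2r+3)$-tuple of identifiers of the nodes at positions $p-r-1,\dots,p+r+1$, read from left to right using the built-in orientation of the line, and by nothing else.

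Next I would restrict attention to lines whose identifiers increase from left to right. Fix a ground set $S \subseteq \{1,\dots,d\}$ of candidate identifiers. On such a line, the tuple seen by an interior node is just the increasing enumeration of the set of identifiers in its window, so its output equals $g(T) := f(\text{increasing enumeration of }T)$ for the $(2r+3)$-element subset $T \subseteq S$ of identifiers in that window, where $f$ is the function from the previous paragraph. Now apply Theorem~\ref{ramseythm} with $K = 2r+3$, $M = 3$, and $L = n$: provided $|S| \ge R(2r+3,\,n,\,3)$, there is a subset $S' \subseteq S$ with $|S'| \ge n$ such that $g$ takes a single value $c^*$ on \emph{every} $(2r+3)$-element subset of $S'$. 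Build the bad instance by taking any $n$ elements of $S'$ and assigning them, in increasing order, to the $n$ nodes of the line from left to right. Then every interior node outputs $c^*$. Since $r < (n-3)/2$ we have $n - 2r - 2 \ge 2$, so the interior positions $\{r+2,\dots,n-r-1\}$ form an interval with at least two consecutive members; the two corresponding adjacent nodes both output $c^*$, contradicting the requirement that neighbours get different colours. Thus $r \ge (n-3)/2$. To cover an algorithm whose round complexity $r$ may depend on $d$: if $r < (n-3)/2$ then $K = 2r+3 < n$, so $R(2r+3, n, 3)$ is at most some quantity $D(n)$ depending only on $n$; hence the argument applies whenever $d \ge D(n)$, and in particular to any algorithm whose round complexity does not depend on the identifier domain, since such an algorithm must also be correct on the domain $\{1,\dots,D(n)\}$.

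I do not expect a genuine obstacle here; the delicate part is purely the bookkeeping that pins the bound down to exactly $(n-3)/2$, namely getting the two radii right (identifiers known to distance $r+1$, endpoint information known only to distance $r$) and therefore the exact interval $[\,r+2,\ n-r-1\,]$ of interior positions. The built-in left/right orientation of the line is what makes the reduction to a colouring of $(2r+3)$-element subsets clean (otherwise one would need $f$ to be symmetric under reversal); and since we only ever use the algorithm's correctness and an upper bound on its round complexity, no structural assumptions about the algorithm are needed.
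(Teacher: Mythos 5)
Your proposal is correct and takes essentially the same route as the paper's own proof: an indistinguishability argument showing that an interior node's colour is a function of the $(2r+3)$-tuple of identifiers in its window, a restriction to left-to-right increasing labelings so that this becomes a colouring of $(2r+3)$-element subsets, and an application of Theorem~\ref{ramseythm} to force two adjacent interior nodes to output the same colour. The only cosmetic differences are that you take $L=n$ in the Ramsey application where the paper takes $L=2r+4$, and you spell out the endpoint-visibility bookkeeping slightly more explicitly; neither changes the argument.
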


\begin{proof}
To obtain a contradiction,
suppose there is a deterministic
algorithm $A$ that always colors the nodes of this graph using  3 colors in $T(n) < (n-3)/2$ rounds.
Then $2T(n) +4  \leq n$.
In addition to its own identifier, each node initially only knows the identifiers of its left and right neighbors.
By the end of the algorithm,
each node only learns the identifiers of the nodes at distance at most  $T(n) + 1$ away.
Nodes that are neither among the first $T(n)+1$ nodes nor the last $T(n)+1$ nodes
(which includes the nodes in positions $T(n) +2$ and $T(n) +3$) cannot determine their position relative to the end of the line.
Each such node only knows its own identifier and the identifiers of the $T(n)+1$ to their left and right,
so it decides what color to output as a function of this sequence of identifiers.
For these nodes, algorithm $A$ can be viewed as a function that maps each sequence of $2T(n)+3$ identifiers to one of the three colors.
When restricted to labelings of the line with identifiers that are strictly increasing from left to right,
it can be viewed as a function that maps each subset of $2T(n)+3$ identifiers to one of the three colors.

By Theorem~\ref{ramseythm}, there is a (large) integer $r=R(2T(n)+3, 2T(n)+4,3)$ and a subset
$S' \subseteq \{ 1,\ldots,r\}$ of identifiers of size at least $2T(n)+4$ such that $A$ maps every
$2T(n)+3$ element subset of $S'$ to the same color.
Consider a line of length $n \geq 2T(n)+4$ whose first $2T(n)+4$ nodes have strictly increasing
identifiers from $S'$.
Then $A$ colors the nodes in positions $T(n)+2$ and $T(n)+3$  with the same color, contradicting the correctness of $A$.
\end{proof}

\begin{lemma}
Every deterministic 
algorithm for computing a maximal independent set of an $n$ node line requires at least $(n-5)/2$ rounds,
if its round complexity does not depend on the domain
from which the node identifiers are chosen
or $d$ is a sufficiently large function of $n$.
\label{css-mis}
\end{lemma}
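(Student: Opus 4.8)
The plan is to reduce the $3$-coloring lower bound of Lemma~\ref{css-vc} to the problem of computing a maximal independent set on a line. The intuition is that a maximal independent set of a long line is a sparse, highly structured object: consecutive gaps between chosen nodes have length $1$ or $2$, so from a maximal independent set one can read off, in a constant number of additional rounds, a proper $3$-coloring of the line. Hence a fast MIS algorithm would yield a fast $3$-coloring algorithm, contradicting Lemma~\ref{css-vc}.

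Concretely, suppose for contradiction that there is a deterministic algorithm $B$ that computes a maximal independent set of every $n$-node line in $T(n) < (n-5)/2$ rounds, with round complexity independent of the identifier domain (or with $d$ a sufficiently large function of $n$). I would run $B$ and then have each node spend two more rounds learning whether it is in the computed set $I$ and whether its left and right neighbors are in $I$. A node $v \notin I$ then has at least one neighbor in $I$ (by maximality); a node $v \in I$ has no neighbor in $I$ (by independence). Using this local picture I would assign colors as follows: every node in $I$ gets color $1$; a node not in $I$ whose left neighbor is in $I$ gets color $2$; a node not in $I$ whose left neighbor is not in $I$ (so its right neighbor must be in $I$, and it lies in a gap of length $2$) gets color $3$. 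One checks that adjacent nodes get distinct colors: two nodes in $I$ are never adjacent; a $\{1\}$ and a $\{2\}$ or $\{3\}$ node differ; and a $\{2\}$-node is never adjacent to a $\{3\}$-node, since a $\{2\}$-node's left neighbor is in $I$ while a $\{3\}$-node's left neighbor is not, and in a length-$2$ gap the left member is the $\{3\}$-node and the right member is the $\{2\}$-node, which are adjacent — so I would instead assign the left member of a length-$2$ gap color $2$ and the right member color $3$, keyed on whether the \emph{right} neighbor is in $I$; then a $\{2\}$-node and a $\{3\}$-node are adjacent only within a single gap, with the $\{2\}$-node on the left, and the neighbor configurations again force consistency. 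The endpoints of the line, which have only one neighbor, are handled directly from the same rules. This gives a deterministic $3$-coloring algorithm running in $T(n) + O(1)$ rounds whose round complexity inherits the independence from the identifier domain, contradicting Lemma~\ref{css-vc} for $n$ large enough, and yielding the bound $(n-5)/2$ (the small additive slack absorbing the $O(1)$ extra rounds and the difference between the $n-3$ of Lemma~\ref{css-vc} and $n-5$ here).

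The main obstacle is getting the color-assignment rule exactly right so that it is both (i) computable from each node's constant-radius view after $B$ terminates and (ii) genuinely proper, including at the two endpoints and inside every length-$2$ gap; the subtlety is that the naive "left neighbor" rule creates a conflict precisely at the interior edge of a length-$2$ gap, so the rule must be stated carefully (e.g., distinguishing the two members of a length-$2$ gap by both their left and right neighbor memberships). Once that combinatorial bookkeeping is pinned down, the reduction and the contradiction with Lemma~\ref{css-vc} are routine, and the arithmetic relating $T(n)$, the additive constant, and the final bound $(n-5)/2$ is straightforward.
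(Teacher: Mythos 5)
Your reduction is, in substance, exactly the paper's proof: run the hypothesized MIS algorithm, have each node broadcast its membership to its neighbors for one extra round instead of terminating, and then color by the three cases (in $I$; not in $I$ with left neighbor in $I$; not in $I$ with left neighbor not in $I$), contradicting Lemma~\ref{css-vc}. The one substantive correction concerns the ``main obstacle'' you flag, which is not an obstacle at all. In a length-two gap $u,v$ with $u$ on the left, maximality forces $u$'s left neighbor into $I$ (since $v\notin I$) while $v$'s left neighbor is $u\notin I$; so under the plain left-neighbor rule $u$ gets color $2$ and $v$ gets color $3$. These are two \emph{different} colors on adjacent nodes, which is precisely what a proper coloring demands -- there is no conflict to repair, and the property you try to establish (``a $2$-node is never adjacent to a $3$-node'') is neither true nor needed. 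Your back-and-forth ends up restating the original rule, so the construction survives, but the worry itself stems from a misreading of what properness requires. Finally, one extra round suffices (each node only needs its left neighbor's membership, which that neighbor broadcasts in the round after it decides); this is what keeps the arithmetic tight, giving a $3$-coloring in fewer than $(n-5)/2+1=(n-3)/2$ rounds and an exact contradiction with Lemma~\ref{css-vc}. With two extra rounds as written, you would only prove a lower bound of $(n-7)/2$, slightly weaker than the stated $(n-5)/2$.
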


\begin{proof}
To obtain a contradiction,
suppose there is a deterministic algorithm 
that always computes a maximal independent set of this graph in fewer than $(n-5)/2$ rounds.
Instead of outputting a value and terminating,  each node
broadcasts this value to its neighbors in the next round.
Then the nodes in the independent set output 0. 
Nodes that are not in the independent set output 1 if they have a left neighbor that is in the independent set;
otherwise, they output 2.
This is an algorithm that 3-colors the nodes of an $n$ node line with 3 colors in fewer than  $(n-3)/2$ rounds, contradicting Lemma~\ref{css-vc}.
\end{proof}

\begin{theorem}
  The Greedy MIS Algorithm is an asymptotically optimal measure-uniform
algorithm with respect to $\mu_1$ and $\mu_2$.
\end{theorem}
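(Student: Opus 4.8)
The plan is to combine the two upper bounds already established with the line lower bound. Lemma~\ref{greedyMISuppermu1} and Lemma~\ref{greedyMISuppermu2} give that, on any connected graph $G$, the Greedy MIS Algorithm terminates within $\mu_1(G)$ rounds and within $\mu_2(G)+1$ rounds, and that it is measure-uniform with respect to both $\mu_1$ and $\mu_2$. So what remains is a matching lower bound: I want to show that every deterministic MIS algorithm that is measure-uniform with respect to $\mu_1$ (respectively $\mu_2$) has round complexity $\Omega(\mu_1)$ (respectively $\Omega(\mu_2)$), expressed as a function of that measure.

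For the hard instances I would use the family of $n$-node lines. An $n$-node line is connected and therefore a single component, it has $\mu_1 = n$, and since it has $\IS = \lceil n/2\rceil$ and $\VC = \lfloor n/2\rfloor$ it has $\mu_2 = 2\lfloor n/2\rfloor$, which equals $n$ when $n$ is even. Hence, over this family, $\mu_1$ takes every positive integer value and $\mu_2$ takes every positive even value.

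Now let $A$ be any deterministic algorithm that solves MIS and is measure-uniform with respect to $\mu_1$ (respectively $\mu_2$), and let $T_A$ denote its round complexity expressed as a function of the measure. On the $n$-node line, measure-uniformity means that the number of rounds $A$ uses is exactly $T_A(\mu_1) = T_A(n)$ (respectively $T_A(\mu_2) = T_A(2\lfloor n/2\rfloor)$); in particular this number does not depend on the domain from which the node identifiers are chosen. Lemma~\ref{css-mis} then applies and gives that $A$ uses at least $(n-5)/2$ rounds on the $n$-node line. Taking $n = k$ (for $\mu_1$) and $n = k$ with $k$ even (for $\mu_2$) yields $T_A(k) \geq (k-5)/2 = \Omega(k)$ in both cases. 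Combined with the $O(\mu_1)$ and $O(\mu_2)$ upper bounds of Lemma~\ref{greedyMISuppermu1} and Lemma~\ref{greedyMISuppermu2}, this shows the Greedy MIS Algorithm is asymptotically optimal among measure-uniform algorithms with respect to both $\mu_1$ and $\mu_2$.

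The only delicate point — and the reason to spell out the measure values on lines explicitly — is matching quantifiers: Lemma~\ref{css-mis} is stated for $n$-node lines, so I need to know that those lines realize arbitrarily large values of the relevant measure (indeed, every value of $\mu_1$ and every even value of $\mu_2$), so that the lower bound really constrains the round-complexity function at all large arguments rather than only at isolated values. Everything beyond that is routine bookkeeping.
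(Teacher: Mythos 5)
Your proposal is correct and follows essentially the same route as the paper's own proof: both invoke Lemma~\ref{css-mis} on the $n$-node line (after observing that measure-uniformity forces the round complexity to be independent of the identifier domain) and combine that lower bound with the upper bounds of Lemmas~\ref{greedyMISuppermu1} and~\ref{greedyMISuppermu2}. Your extra care in checking that lines realize every value of $\mu_1$ and every even value of $\mu_2$ is a reasonable tightening of the quantifier bookkeeping that the paper leaves implicit, but it does not change the argument.
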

\begin{proof}
Consider any  algorithm for the MIS problem that is measure-uniform with respect to $\mu_1$ or $\mu_2$.
Then its round complexity 
does not depend on the domain from which the node identifiers are chosen.
By Lemma~\ref{css-mis}, when
run on an $n$ node line, $L_n$, this algorithm takes at least $(n-5)/2 \in \Omega(\mu_1(L_n)) \subseteq \Omega(\mu_2(L_n))$ rounds.
On every graph, $G$, with $n$ nodes, 
Lemma~\ref{greedyMISuppermu2} says that the round complexity of the Greedy MIS Algorithm  is at most
$\max\{\mu_2(S) +1\ |\ \textrm{ S is a component of } G\} \leq
\mu_2(G)+1 \leq \mu_1(G) + 1\ = n+1$.
Hence, it is asymptotically optimal with respect to $\mu_1$ and $\mu_2$.
\end{proof}

Korman, Sereni and Viennot~\cite{KSV13} give a method for constructing algorithms that
are uniform with respect to $n$ and $\Delta$ from ones that
are not, without increasing the running time by more than a constant factor.
Their idea is to repeatedly execute the original algorithm with increasingly larger guesses for $n$ (and $\Delta$),
but storing the values of the output variables as a tentative solution,
rather than outputting them.
After each repetition, they apply 
a pruning algorithm,
output the extendable partial solution it produces, 
and perform the next repetition on the remainder of the graph, if it is not empty.
However, algorithms that depend on the nodes having distinct identifiers and whose round complexity depends on $d$
remain nonuniform with respect to $d$.
In particular, they obtain a coloring algorithm running in $O(\Delta + \log^*d)$ rounds that is uniform with respect to $\Delta$
and an $O(\log^3 n \log d)$ round matching algorithm that is uniform with respect to $n$.
These algorithms cannot be used to obtain good degradation, since they are not measure-uniform.

\section{Templates for Graph Algorithms with Predictions}
\label{template}
Starting with the early work by Lykouris and
Vassilvitskii~\cite{LV18,LV21} and by Kumar, Purohit, and Svitkina~\cite{KPS18},
an online algorithm that is consistent and has a good
competitive ratio as a function of the prediction error
can sometimes be combined with another online algorithm that is robust
to obtain an online algorithm that is both consistent and robust.
The paper by Kumar, Purohit, and Svitkina
was applied to a nonclairvoyant scheduling
problem, and this was generalized to a framework covering 
additional scheduling problems by Lindermayr and Megow~\cite{LM22}.

Using similar strategies, we develop four general templates
that can be used to obtain 
distributed graph algorithms with predictions. 
The templates all start with a \good initialization algorithm to ensure consistency,
followed by a distributed graph algorithm without predictions, or two that are
combined in some way.
Each template requires certain properties of the algorithmic components they use
and promises certain properties for the resulting algorithm with predictions.
Following the result for each template, we briefly discuss the necessary
properties and present relevant comparisons to earlier templates.

The templates all use a \emph{reference algorithm}, $R$, and ensure that the worst-case round complexity of the resulting algorithm
is within a constant factor of the worst-case round complexity of $R$ (so it is robust with respect to $R$).
Three of them also use a measure-uniform algorithm, $U$, to obtain good degradation.
These algorithms are used mostly in a black box manner, although they may be modified slightly.
For example, an algorithm could be interrupted after a given number of rounds.
To make switching between algorithms work smoothly, we assume that, prior to terminating,
nodes
inform their active neighbors about their output values.
Further details are discussed with the relevant templates.

For each template, we analyze the round complexities of the
resulting algorithm with predictions
as a function of an error measure.
We also 
present an example of an algorithm with predictions for the MIS problem obtained using that template
and determine its round complexity as a function of $\eta_1$ or $\eta_2$.
Since $\eta_2 \leq \eta_1$, an upper bound as a function of $\eta_2$ also implies an upper bound
as a function of $\eta_1$.
Throughout this section, we assume that the round complexities of $R$ and $U$ 
depend on some subset of $\{n,\Delta,d\}$ and are nondecreasing functions of these parameters.
In Section~\ref{black-white},
we give examples of graph algorithms with predictions
for the error measure $\eta_{bw}$, constructed
using our templates (with minor modifications).

\subsection{The Simple Template}
\label{section-simple-template}

Using the first template, we obtain an algorithm with predictions by first running a \good initialization algorithm and then running
any reference algorithm, $R$, without predictions.
The properties of the resulting algorithm will depend on the properties of~$R$. 
Note that the maximum of the degrees of the nodes in an error component is always less than the number of nodes in that error component.

\begin{algorithm}
\caption{Simple Template}\label{simple_template}
\begin{algorithmic}[0]
\State Run a \good initialization algorithm, $B$
\State Run a reference algorithm, $R$.
\end{algorithmic}
\end{algorithm}

\begin{observation}
Given a \good initialization algorithm, $B$, with round complexity $c(n)$ and a reference algorithm, $R$,
with round complexity at most $r(\mu)$, which is uniform with respect to $\mu$,
the algorithm with predictions obtained using the Simple Template (Algorithm~\ref{simple_template}) has consistency $c(n)$, 
has round complexity $c(n) + r(\eta)$,
where $\eta$ is the maximum of $\mu$ over all error components, 
and, hence,
is $r(\eta)$-degrading.
\end{observation}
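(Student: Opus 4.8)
The plan is to prove the three claimed properties in sequence, drawing directly on the definitions laid out in Section~\ref{intro-error-measures} and the facts already established about \good{} initialization algorithms and the error measure $\eta$. First I would handle consistency: by hypothesis, $B$ is a \good{} initialization algorithm, so it is in particular an initialization algorithm, and when $\eta = 0$ (the predictions form a correct solution) every node outputs its prediction and terminates during $B$. Since $B$ has round complexity $c(n)$, the algorithm produced by the Simple Template terminates in $c(n)$ rounds when $\eta = 0$, which is exactly the definition of having consistency $c(n)$.

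Next I would bound the overall round complexity. After $B$ finishes in at most $c(n)$ rounds, the active nodes form components each of which is a subgraph of an error component (this is exactly the property guaranteed by \good{} initialization algorithms, as noted in Section~\ref{error-measures} — in fact $\eta' \le \eta$ in the notation there). Running $R$ on the remaining graph solves the problem separately on each such component, because the partial solution produced by $B$ is extendable. Here the key point is that $R$ is \emph{uniform with respect to $\mu$}: its round complexity on a component $S$ is bounded by $r(\mu(S))$, depending only on the value of $\mu$ for that component and not on global parameters of the original graph. Since each surviving component $S$ satisfies $\mu(S) \le \eta$ — using monotonicity of $\mu$ together with the fact that each such $S$ sits inside some error component, whose $\mu$-value is at most $\eta$ by definition of $\eta$ as the max over error components — and $r$ is nondecreasing, the run of $R$ takes at most $r(\eta)$ rounds. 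Adding the two phases gives total round complexity at most $c(n) + r(\eta)$.

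Finally, $f(\eta)$-degrading means terminating within $f(\eta) + c(n) + O(1)$ rounds where $c(n)$ is the consistency; taking $f(\eta) = r(\eta)$, the bound $c(n) + r(\eta)$ established above is of this form (indeed with no additive $O(1)$ slack needed), so the algorithm is $r(\eta)$-degrading.

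The main obstacle — such as it is — lies in the second step: being careful that the components remaining after $B$ are not identical to the error components (they are only subgraphs of them), so one must invoke monotonicity of $\mu$ and the inequality $\eta' \le \eta$ from Section~\ref{error-measures} rather than claiming the $\mu$-values match exactly, and one must use the uniformity of $R$ to localize its round complexity to these components. Everything else is a direct unwinding of the definitions. I would also remark that the parenthetical about degrees (the max degree in an error component is less than its number of nodes) is there to reassure the reader that if $r$ happens to depend on $\Delta$ rather than $n$, the argument still goes through with $\mu = \mu_1$, since $\Delta$ restricted to a component is bounded by that component's node count.
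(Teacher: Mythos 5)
Your proof is correct and is exactly the definition-unwinding the paper intends for this statement (which it labels an observation and leaves without an explicit proof): consistency from $B$ terminating on correct predictions, the $c(n)+r(\eta)$ bound from the remaining components being subgraphs of error components together with monotonicity of $\mu$ and uniformity of $R$, and degradation by definition. No gaps; your care about the remaining components being only subgraphs of the error components (hence $\eta'\leq\eta$) is the one point worth being careful about, and you handle it properly.
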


One example is to use the MIS Initialization Algorithm as $B$ and the  Greedy MIS Algorithm as $R$ in the Simple Template.
The resulting MIS algorithm with predictions has consistency 3.
By Lemma~\ref{greedyMISuppermu1}, the round complexity of the resulting MIS algorithm with predictions is at most $\miseta +3$
and, by Lemma~\ref{greedyMISuppermu2}, it is at most $\eta_2 + 4$.
Thus, it is both $\miseta$-degrading and $\eta_2$-degrading.
It is also robust with respect to the Greedy MIS Algorithm.
However, the worst-case round complexity of this reference algorithm is large, so the resulting algorithm with predictions is not very good.

Another example of the Simple Template uses a different reference algorithm~\cite{KSV13}. It computes a Maximal Independent Set in $O(\Delta+ \log^*d)$
rounds, its round complexity does not depend on $n$, and  it is uniform with respect to both $\Delta$ and $d$.
Essentially, it uses successively significantly larger estimates for the values of these parameters
to make previously known maximal independent set algorithms uniform with respect to these parameters.
The MIS algorithm with predictions that results from using the Simple Template with this algorithm as $R$ and
the MIS Initialization Algorithm as $B$
is consistent and  has round complexity $O(\Delta' + \log^*d)$, where $\Delta'$ is the maximum of the degrees of 
the nodes in the error components.

Since the maximum degree  of the nodes in
 an error component is less than the number of nodes in that error component,
 the round complexity is $O(\eta_1 + \log^*d)$. 
Hence, by definition,  it is not $O(\eta_1)$-degrading or $O(\eta_2)$-degrading.
Note that any deterministic algorithm for the MIS problem on graphs with $n$ nodes and maximum degree $\Delta$ requires 
$\Omega(\min\{ \Delta + \log^*n, (\log n)/\log\log n\} )$ rounds  in the LOCAL model~\cite{BBHORS21}.
If $d \in n^{O(1)}$,
as is often assumed,
then this algorithm with predictions is robust for $\Delta \in O(\log n/\log\log n)$,
and it is close to being $O(\eta_1)$-degrading, since $\log^* d$ is almost constant.

\paragraph{Summary.} The Simple Template ensures consistency.
If  the reference algorithm, $R$, is measure-uniform with respect to $\mu$
and has round complexity 
 $f(\mu)$
then the resulting algorithm is $f(\eta)$-degrading.
When the round complexity of  $R$ is an asymptotically increasing function of $d$, the size of
the domain from which node identifiers are chosen, the resulting algorithm is neither $f(\eta_1)$-degrading
nor $f(\eta_2)$-degrading for any function $f$.
If the round complexity of $R$ does not depend on $d$,
Lemma~\ref{css-mis} says that the worst-case round complexity
of $R$ on a component is at least linear in the size of the component.
Since there are MIS algorithms with polylogarithmic round complexity, in this case, the resulting algorithm is not robust.

\subsection{Consecutive Template}
\label{consec}

Suppose that (an upper bound on) the round complexity, $r(n,\Delta,d)$, of a reference algorithm, $R$,
depends on parameters of the graph that are known to all  nodes.
Then the nodes can compute the value of this function
for the entire graph.
When combined with a measure-uniform algorithm, $U$, with 
$f(\mu)$ round complexity,
we can obtain an algorithm with predictions that is 
$2f(\eta)$-degrading,
where $\eta$ is the maximum of $\mu$ over all error components,
and is  robust with respect to the  reference algorithm.
The idea is to run the measure-uniform algorithm following a \good
initialization algorithm for $r(n,\Delta,d)$ rounds,
to handle instances with small prediction error
and then run
the reference algorithm on the remaining graph. 

If the measure-uniform algorithm does not output a complete solution within the predetermined number of rounds,
it may be necessary to output values at some additional nodes to ensure that, when the reference algorithm begins (on the remaining part of the graph),
the partial solution that has been output is extendable.
A \emph{clean-up} algorithm, $C$, extends a partial solution by having a (possibly empty) set of 
nodes
output values, so that the resulting partial solution is extendable. 
Such an algorithm can generally be defined from a \good initialization algorithm for the same problem, provided that  
nodes
inform their active neighbors of the values they are going to output 
each round. 
We use $c'(n)$ to denote its round complexity, since it might be different than the round complexity, $c(n)$ of the initialization algorithm.

For the MIS problem, a clean-up algorithm can be performed in one round: every active 
node
which has a neighbor that output 1
simply outputs 0 after informing its active neighbors that it will output 0.

A clean-up algorithm is similar to a pruning algorithm~\cite{KSV13}.
However, a pruning algorithm outputs an extendable partial solution that is part of a given tentative solution (which may contain some default values),
whereas a clean-up algorithm is given a partial solution and outputs additional values, if necessary, to make the resulting partial solution extendable.

\begin{algorithm}
\caption{Consecutive Template}\label{consecutive_template}
\begin{algorithmic}[0]
\State Run a \good initialization algorithm, $B$
\State Run a measure-uniform algorithm, $U$, for $r(n,\Delta,d) + c'(n)$ rounds
\State Run a clean-up algorithm, $C$
\State Run a reference algorithm, $R$
\end{algorithmic}
\end{algorithm}

\begin{lemma}
Given a reference algorithm, $R$, with round complexity at most $r(n,\Delta,d)$, a \good initialization algorithm, $B$, with round complexity $c(n) \in O(r(n,\Delta,d))$,
a measure-uniform algorithm, $U$ with round complexity 
$f(\mu)$,
and a clean-up algorithm, $C$, with round complexity $c'(n) \in O(r(n,\Delta,d))$,
the algorithm with predictions obtained using the Consecutive Template (Algorithm~\ref{consecutive_template}) has consistency $c(n)$, is 
$2f(\eta)$-degrading,
where $\eta$ is the maximum of $\mu$ over all error components,
and
is robust with respect to $R$. 
\label{consec-lem}
\end{lemma}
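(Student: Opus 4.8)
The plan is to verify the three asserted properties—consistency $c(n)$, $2f(\eta)$-degradation, and robustness with respect to $R$—by tracking the set of active nodes at the start of each of the four phases and bounding the length of each phase. Correctness of the overall output follows phase by phase: $B$ produces an extendable partial solution; after $U$ is (possibly) interrupted mid-computation, $C$ restores extendability of the partial solution output so far; and an extendable partial solution together with any solution on the remaining graph is a solution on the whole graph, so running $R$ on the still-active subgraph completes a valid solution.

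The first step is a structural observation about phase~1. Because $B$ is a \good initialization algorithm, the components of the subgraph induced by the nodes still active after phase~1 are each contained in an error component; writing $\eta'$ for the maximum of $\mu$ over these components, the discussion of error measures gives $\eta' \le \eta$. This already settles consistency: when $\eta = 0$ there are no error components, so $B$ outputs a complete solution and every node terminates within the $c(n)$ rounds of phase~1, after which phases $2$–$4$ do nothing.

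For degradation I would split into two cases according to whether any node survives phase~2. If no node is active when phase~2 ends, then on every remaining component $S'$ the algorithm $U$ terminated within $f(\mu(S')) \le f(\eta') \le f(\eta)$ rounds (using that $f$ is nondecreasing and $\mu(S') \le \eta'$), so phase~2 lasted at most $f(\eta)$ rounds and phases~$3$ and~$4$ contribute nothing; the total is at most $c(n)+f(\eta)$. If some node is still active when phase~2 ends, then the component $S'$ containing it was not completed by $U$ within the $r(n,\Delta,d)+c'(n)$ rounds it was allotted, so $f(\eta) \ge f(\mu(S')) > r(n,\Delta,d)+c'(n)$. In that case phases~$2$, $3$, and~$4$ together take at most $\bigl(r(n,\Delta,d)+c'(n)\bigr) + c'(n) + r(n,\Delta,d) = 2\bigl(r(n,\Delta,d)+c'(n)\bigr) < 2f(\eta)$ rounds, where the bound on phase~$4$ uses that $R$, run on the remaining graph (which has at most $n$ nodes, maximum degree at most $\Delta$, and identifiers bounded by $d$), terminates within $r(n,\Delta,d)$ rounds since $r$ is nondecreasing in its parameters. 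In both cases the total is at most $c(n) + 2f(\eta) + O(1)$, the $O(1)$ absorbing the constant per-phase overhead of announcing outputs to active neighbors before switching algorithms; this is precisely $2f(\eta)$-degradation. Robustness with respect to $R$ is then immediate by summing the worst-case phase lengths: $c(n) + \bigl(r(n,\Delta,d)+c'(n)\bigr) + c'(n) + r(n,\Delta,d) \in O\bigl(r(n,\Delta,d)\bigr)$, using the hypotheses $c(n), c'(n) \in O\bigl(r(n,\Delta,d)\bigr)$.

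The routine parts here are the phase-length bookkeeping and the correctness argument; the one point that needs care is the case analysis for degradation, and in particular the role of the extra $c'(n)$ rounds tacked onto phase~2. They are exactly what forces a surviving node to certify $f(\eta) > r(n,\Delta,d)+c'(n)$, so that the ``$U$ was interrupted'' branch costs at most $2\bigl(r(n,\Delta,d)+c'(n)\bigr)$, which is then strictly below $2f(\eta)$; I would make sure that inequality is extracted correctly from the existence of the surviving node and from monotonicity of $f$, rather than from the weaker bounds on $c(n)$ and $c'(n)$.
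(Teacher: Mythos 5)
Your proof is correct and follows essentially the same approach as the paper's: a two-case analysis on whether $U$ completes within its allotted $r(n,\Delta,d)+c'(n)$ rounds, with the surviving-node case certifying $f(\eta) > r(n,\Delta,d)+c'(n)$ so that the full four-phase cost $c(n)+2r(n,\Delta,d)+2c'(n)$ is below $c(n)+2f(\eta)$. Your framing of the case split via a surviving node (rather than directly on the inequality $f(\eta)\le r(n,\Delta,d)+c'(n)$) is logically equivalent, and your added remarks on consistency and extendability are consistent with the surrounding text.
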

  
\begin{proof}
If $f(\eta) \leq r(n,\Delta,d) + c'(n)$, then 
the resulting algorithm terminates while running $U$, so
the number of rounds performed is at most $c(n)+ f(\eta) \in O(r(n,\Delta,d))$.
If $f(\eta) > r(n,\Delta,d)+ c'(n)$, then 
$R$ is performed after running  algorithms $B$, $U$, and $C$, so
the number of rounds performed is at most $c(n) +  2 r(n,\Delta,d) + 2c'(n)  <  2f(\eta) + c(n)$.
In either case, the number of rounds performed is at most $2f(\eta) + c(n)$, so the algorithm is 
$2f(\eta)$-degrading.
Since the number of rounds performed is also in $O(r(n,\Delta,d))$, the algorithm is robust with respect to $R$.
\end{proof}

For example, consider the 
MIS algorithm by Ghaffari and Grunau~\cite{GG24}
that has $\tilde{O}(\log^{5/3} n)$ round complexity. It depends on 
each node
knowing an upper bound $N \in n^{O(1)}$ on the number of nodes
in the graph and assumes that $d \in n^{O(1)}$.
A distributed graph algorithm with predictions for the MIS problem that is consistent, robust with respect to this algorithm,
and 
both $2\eta_1$-degrading and $2\eta_2$-degrading can be obtained from the Consecutive Template using their algorithm as the reference algorithm, $R$, the MIS Initialization Algorithm as the initialization algorithm,  $B$, the Greedy MIS Algorithm as the 
measure-uniform algorithm, $U$, and the one round clean-up algorithm described above as $C$.

\paragraph{Summary.}
The Consecutive Template ensures consistency.
In the worst case, it causes the round complexity of the resulting algorithm to be twice
the round complexity of the reference algorithm, $R$, 
and
its degradation
function to be twice the round complexity, $f(\mu)$, of the measure-uniform algorithm, $U$.
Provided $R$ is asymptotically optimal, it also ensures robustness.
In addition, if $f(\mu)$
does not grow too quickly, then the resulting algorithm is smooth with respect to $R$.
In contrast to the Simple Template, we can obtain both robustness and linear degradation simultaneously for MIS.
However, it requires that all nodes are able to compute good upper bounds on the round complexity,  $r(n,\Delta,d)$, of $R$ and the round complexity, $c'(n)$, of the clean-up algorithm $C$.

\subsection{Interleaved Template}
\label{interleaved}
Consider a reference algorithm, $R$, with round complexity $r(n,\Delta,d)$ that  can be divided into $m(n,\Delta,d)$ phases with a good known upper bound,
$r_i(n,\Delta,d)$,
on the round complexity of phase $i$,
such that the sum of $r_i(n,\Delta,d) $ over all phases is in $O(r(n,\Delta,d))$.
Also consider
a  measure-uniform algorithm, $U$, with round complexity 
$f(\mu)$
that can be divided into phases with these round complexities.
Suppose that,
at the end of each phase of these algorithms, the partial solution that has been
computed is extendable.
In this case, using the Interleaved Template, we can obtain an algorithm with predictions
by running a \good initialization algorithm, $B$, with round complexity $c(n) \in O(r(n,\Delta,d))$, and then running
the phases of the measure-uniform algorithm and the reference algorithm in an interleaved manner.
The first phase of $U$ is run first to handle instances with small prediction error.
During each round, all active nodes should be performing the same phase of the same algorithm.
Thus, unless a node has terminated, it should wait until the number of rounds that has elapsed
in a phase is the known upper bound for that phase, before starting the next phase.

\begin{algorithm}
\caption{Interleaved Template}\label{interleaved_template}
\begin{algorithmic}[0]
\State Run a \good initialization algorithm, $B$
\For{$i = 1, 2, 3, \ldots, m(n,\Delta,d)$}
\State Run phase $i$
of the measure-uniform algorithm, $U$, for $r_i(n,\Delta,d)$ rounds
\State Run phase $i$ of the reference algorithm, $R$,  for $r_i(n,\Delta,d)$ rounds
\EndFor
\end{algorithmic}
\end{algorithm}

\begin{lemma}
  \label{interleavedlemma}
Suppose we have
a reference algorithm, $R$, with round complexity $r(n,\Delta,d)$, and a  measure-uniform algorithm, $U$, with round complexity 
$f(\mu$),
both of which  can be divided into $m(n,\Delta,d)$ phases, where the round complexity of phase $i$ is at most $r_i(n,\Delta,d)$, 
where the sum of $r_i(n,\Delta,d)$ over all phases is in $O(r(n,\Delta,d))$.
Also suppose that  the partial solution at the end of each phase of these algorithms is extendable.
Given
 a \good initialization algorithm, $B$, with round complexity $c(n) \in O(r(n,\Delta,d))$,
 the algorithm with predictions obtained using the Interleaved Template (Algorithm~\ref{interleaved_template}) has consistency $c(n)$, is 
$2f(\eta)$-degrading, and is
robust with respect to $R$.
\label{interleave-lem}
\end{lemma}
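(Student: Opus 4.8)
The plan is to verify the three conclusions---consistency $c(n)$, $2f(\eta)$-degradation, and robustness with respect to $R$---separately, mirroring the proof of Lemma~\ref{consec-lem}. Consistency is immediate: when $\eta=0$ every error component is empty, so the base algorithm would output a complete correct solution, and since $B$ is a \good initialization algorithm the extendable partial solution it produces contains the base algorithm's and is therefore also complete; hence every node terminates during $B$, within $c(n)$ rounds, and the loop does nothing.

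For the other two claims I would first reduce to error components. After $B$ halts, the still-active nodes form connected components, each of which---because $B$ is \good---is a subgraph of some error component, so by monotonicity of $\mu$ every such active component $S$ has $\mu(S)\le\eta$. The hypothesis that every phase of $U$ and of $R$ leaves an extendable partial solution is what makes the interleaving legitimate: at each iteration boundary the outputs so far form an extendable partial solution, so the next phase is run on the subgraph induced by the remaining active nodes---still a subgraph of the original active components, hence of $\mu$-value at most $\eta$---and the measure-uniform guarantee for $U$ continues to hold there. Since iteration $i$ runs phase $i$ of $U$ for its full length $r_i(n,\Delta,d)$, after iteration $i$ the algorithm $U$ has been executed for $r_1+\cdots+r_i$ rounds on each active component, while the interspersed phases of $R$ from iterations $1,\dots,i$ contribute the same number of rounds. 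Now split into two cases, as in Lemma~\ref{consec-lem}. If $U$---and hence every node---terminates during some iteration $\ell$, the rounds elapsed are $c(n)$ plus the $U$-rounds of iterations $1,\dots,\ell$ plus the $R$-rounds of iterations $1,\dots,\ell-1$ (the $R$-part of iteration $\ell$ is never reached); the second sum is at most the first, and the first equals the running time of $U$ on the last-finishing component $S$, which is at most $f(\mu(S))\le f(\eta)$, so the total is at most $c(n)+2f(\eta)$. Otherwise the loop runs through all $m(n,\Delta,d)$ iterations without $U$ terminating on some active component $S$; then $U$ was executed for $\sum_{i=1}^{m}r_i$ rounds without halting, so $f(\eta)\ge f(\mu(S))>\sum_{i=1}^{m}r_i$, giving a total of $c(n)+2\sum_{i=1}^{m}r_i<c(n)+2f(\eta)$, and moreover the phases of $R$ have by then covered all of $R$'s execution, so every node has terminated. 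In both cases the algorithm halts within $2f(\eta)+c(n)+O(1)$ rounds, so it is $2f(\eta)$-degrading; and since in both cases the loop executes at most $2\sum_{i=1}^{m}r_i(n,\Delta,d)\in O(r(n,\Delta,d))$ rounds and $c(n)\in O(r(n,\Delta,d))$, the algorithm is robust with respect to $R$.

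The step I expect to be the main obstacle is the bookkeeping in the degradation argument: showing that padding each phase of $U$ out to its worst-case length $r_i$ and slotting an equally long phase of $R$ between consecutive phases of $U$ merely doubles the effective running time instead of accumulating uncontrolled slack. This is clean precisely when the bounds $r_i$ are tight, i.e.\ when the actual length of phase $i$ of $U$ equals $r_i$, as in the Greedy MIS Algorithm, where every phase is exactly two rounds (so the only slack is the single final round, absorbed into the $O(1)$ of the degradation definition); I would take this tightness to be the intended reading of ``a good known upper bound on the round complexity of phase $i$.'' A secondary point needing care is that the phases of $R$ change which nodes are active beneath $U$ between successive phases of $U$, so one must check---using the extendability of the partial solution at each phase boundary---that the later phases of $U$ (and of $R$) still behave correctly on the smaller remaining graph.
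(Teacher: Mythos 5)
Your proof is correct and follows essentially the same route as the paper's: reduce to error components via monotonicity and the \good initialization algorithm, then case on how far the interleaved execution gets, bounding the $R$-rounds by the preceding $U$-rounds (which are at most $f(\eta)$) to get $c(n)+2f(\eta)$, and bounding everything by $c(n)+2\sum_i r_i \in O(r(n,\Delta,d))$ for robustness. The padding subtlety you flag at the end is real but is also left implicit in the paper's proof, which likewise treats the rounds devoted to $U$ as at most $f(\eta)$.
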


\begin{proof}
Since the error, $\eta$, does not increase when errors in the predictions are removed
and the round complexities of $R$ and $U$ are assumed to be nondecreasing functions,
it follows that interleaving these algorithms does not increase the maximum number of rounds each performs. 

We now determine an upper bound on the round complexity of the resulting algorithm with predictions.
If $f(\eta) \leq r_1(n,\Delta,d)$, then the number of rounds performed
is at most $c(n) + f(\eta) \leq c(n) +  r_1(n,\Delta,d) \in O(r(n,\Delta,d))$.
If $\sum_{i=1}^{j} r_i(n,\Delta,d) < f(\eta) \leq \sum_{i=1}^{j+1}r_i(n,\Delta,d))$, where $1 \leq  j < m(n,\Delta,d)$,
then part of $U$ is performed and at most $j$ phases of $R$ are performed,
for a total of at most $c(n) + f(\eta) + \sum_{i=1}^{j} r_i(n,\Delta,d) < c(n) + 2 f(\eta)
\leq c(n) + 2 \sum_{i=1}^{j+1}r_i(n,\Delta,d))  \in O(r(n,\Delta,d))$ rounds.
If 
$\sum_{i=1}^{m(n,\Delta,d)} r_i(n,\Delta,d) < f(\eta)$, then it is possible that all $m(n,\Delta,d)$ phases of $R$ are performed,
for a total of at most $c(n) +  2\sum_{i=1}^{m(n,\Delta,d)} r_i(n,\Delta,d)  < c(n) + 2 f(\eta)$ rounds.
Also note that  $c(n) +   2\sum_{i=1}^{m(n,\Delta,d)} r_i(n,\Delta,d) \in O(r(n,\Delta,d))$.

In all cases, the number of rounds performed is at most $c(n) +2f (\eta)$,
so the algorithm is $2f(\eta)$-degrading.
Since the number of rounds performed is also in $O(r(n,\Delta,d))$, the algorithm
is robust with respect to~$R$.
\end{proof}

Next, we present an example of how this template can be applied.
We start with a discussion of the reference algorithm, in some detail, in order to
define phase lengths.
Ghaffari, Grunau, Haeupler, Ilchi and Rozho\v{n}~\cite{GGHIR23} present an 
$\tilde{O}(\log n \log d)$ round clustering algorithm.
Given a graph with $n$ nodes, it computes
a collection of disjoint sets of nodes whose union has size at least $n/2$
such that nodes in different sets are not adjacent in the graph
and the subgraph induced by each set has diameter $O(\log n\log\log n)$.
For each of these induced subgraphs, it also computes a breadth-first search tree.
This tree can be used to compute a maximal independent set of this induced subgraph
in $O(\log n\log\log n)$ rounds.
Consider the reference algorithm, $R$, obtained by repeatedly performing phases that
consist of running the clustering algorithm,
computing a maximal independent set of the subgraph induced by each set,
and then performing the one round clean-up algorithm.
Since the number of nodes remaining in the graph at the end of phase $i$ is at most
$n/2^i$, it is easy for these nodes to compute an upper bound $r_{i+1} \in \tilde{O}(\log(n/2^i)\log d)$ on the number
of rounds needed for the next phase.
It is not necessary that they compute the number of phases that will be performed.

\begin{corollary}
  Using the Interleaved Template with the MIS Initialization Algorithm as the initialization
  algorithm, $B$,
  the Greedy MIS Algorithm as the measure-uniform algorithm, $U$,
  and the reference algorithm, $R$,  with phase lengths as
  just described, gives an algorithm with predictions for MIS that has
  consistency~3 and round complexity $3+2\min\{ \eta_1, \tilde{O}(\log \eta_1\log n \log d) \}$.
  It is both $2\eta_1$-degrading and $2\eta_2$-degrading, and it is robust and smooth with respect to the reference algorithm $R$.
\end{corollary}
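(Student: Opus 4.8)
The plan is to obtain the consistency, robustness, degradation, and smoothness claims by a direct application of Lemma~\ref{interleavedlemma}, and then to sharpen the round-complexity estimate it provides using the observation that, once the \good initialization algorithm has run, the only active nodes lie in (subgraphs of) the error components, each of which has at most $\eta_1$ nodes.

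First I would verify the hypotheses of Lemma~\ref{interleavedlemma}. The MIS Initialization Algorithm is \good with round complexity $c(n)=3\in O(r(n,\Delta,d))$, where $r(n,\Delta,d)$ is the round complexity of $R$. By Lemmas~\ref{greedyMISuppermu1} and~\ref{greedyMISuppermu2}, the Greedy MIS Algorithm is measure-uniform with respect to $\mu_1$ and to $\mu_2$, with round complexities $\mu_1$ and $\mu_2+1$. For the phase structure of $R$, I would argue that one phase --- running the clustering algorithm of~\cite{GGHIR23}, computing a maximal independent set of each cluster from its breadth-first search tree, and then running the one-round clean-up algorithm --- removes at least half of the active nodes in every component (because the clusters cover at least half of the remaining nodes and, within a cluster, every unselected node is adjacent to a selected one). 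Hence $R$ has $m(n,\Delta,d)\in O(\log n)$ phases, phase $i+1$ is runnable in $r_{i+1}(n,\Delta,d)\in\tilde{O}(\log(n/2^i)\log d)$ rounds (an upper bound every node can compute, since at most $n/2^i$ nodes remain), $\sum_i r_i(n,\Delta,d)$ is by construction an upper bound on $r(n,\Delta,d)$, and the clean-up makes the partial solution at each phase boundary of $R$ extendable. I would divide the Greedy MIS Algorithm into the same number of phases by letting phase $i$ be $r_i(n,\Delta,d)$ rounds of its main loop, after rounding each $r_i(n,\Delta,d)$ up to an even number, so that every phase boundary falls after an even-numbered round of the Greedy MIS Algorithm, where its partial solution is extendable. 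Lemma~\ref{interleavedlemma} then yields consistency $3$, robustness with respect to $R$, and --- applied first with measure $\mu_1$ and then with $\mu_2$ --- that the algorithm is $2\eta_1$-degrading and (absorbing the additive constant coming from $f(\mu_2)=\mu_2+1$ into the $O(1)$ term) $2\eta_2$-degrading. Since a predicted MIS is checkable in $\Theta(1)$ rounds and $2\eta_1$ is linear, the algorithm is consistent and smooth with respect to $R$.

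To get the bound $3+2\min\{\eta_1,\tilde{O}(\log\eta_1\log n\log d)\}$ on the round complexity, I would combine two estimates. From the proof of Lemma~\ref{interleavedlemma}, the algorithm terminates within $3+2\eta_1$ rounds: after initialization, the active nodes form subgraphs of error components of size at most $\eta_1$, so by Lemma~\ref{greedyMISuppermu1} the Greedy MIS Algorithm finishes within $\eta_1$ of its own rounds, and by the time it does so the interleaved copy of $R$ has performed strictly fewer rounds. Separately, since each phase of $R$ halves the number of active nodes in every component, and every error component has at most $\eta_1$ nodes, $R$ terminates on all of them within $\log\eta_1+O(1)$ phases of the interleaved loop --- even though the per-phase budgets $r_i(n,\Delta,d)$ are computed from the global~$n$; since $r_i(n,\Delta,d)\in\tilde{O}(\log n\log d)$ for every $i$, running the loop for $\log\eta_1+O(1)$ iterations uses $3+2\sum_{i=1}^{\log\eta_1+O(1)}r_i(n,\Delta,d)=3+\tilde{O}(\log\eta_1\log n\log d)$ rounds, after which all nodes have terminated. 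Thus the algorithm terminates by round $\min\{\,3+2\eta_1,\ 3+\tilde{O}(\log\eta_1\log n\log d)\,\}\le 3+2\min\{\eta_1,\tilde{O}(\log\eta_1\log n\log d)\}$, as claimed.

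The hard part is the second of these estimates: one has to notice that, although $R$'s phase budgets are set from global parameters --- so a single phase of $R$ is no cheaper on a small component than on a large one --- the \emph{number} of phases of $R$ that run before the whole algorithm halts is governed by the size of the largest error component rather than by $n$, precisely because every phase of $R$ geometrically shrinks every component and the initialization algorithm has already removed everything outside the error components. Verifying the halving property and phase lengths of $R$, and arranging the phase boundaries of the Greedy MIS Algorithm to be extendable, are routine.
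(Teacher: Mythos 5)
Your proposal is correct and follows essentially the same route as the paper's proof: verify the hypotheses of Lemma~\ref{interleavedlemma} (choosing the $r_i$ even so the Greedy MIS phase boundaries are extendable), apply it for consistency, robustness, and the two degradation bounds, and then sharpen the round complexity by noting that each phase of $R$ halves the number of active nodes in every error component, so only about $\log\eta_1$ phases of cost $\tilde{O}(\log n\log d)$ each are ever run. Your write-up is merely more explicit than the paper's about why the halving holds and why the $3+2\eta_1$ bound survives the interleaving.
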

\begin{proof}
  The MIS Initialization Algorithm consists of three rounds.
The Greedy MIS Algorithm has round complexity bounded above by both 
$\mu_1$ and $\mu_2 + 1$.
The partial solution it outputs at the end of every even numbered round is extendable, so
we choose $r_i$ to be even for $1 \leq i \leq \log n$ in
the Interleaved Template.
By Lemma~\ref{interleavedlemma},  the resulting algorithm has consistency 3, is both $2\eta_1$-degrading and $2\eta_2$-degrading, and
is robust with respect to $R$. Hence it is also smooth with respect to $R$.

Since the number of active nodes in each error component decreases by at
least a factor of two for each phase of $R$, at most $\log \eta_1$ iterations are performed.
Hence,  the resulting algorithm has round complexity at most
 $c(n) +   2\sum_{i=1}^{\log\eta_1} r_i(n,\Delta,d) \in \tilde{O}(\log \eta_1 \log n \log d)$.
\end{proof}

\paragraph{Summary.} The Interleaved Template ensures consistency.
Like
the Consecutive Template, 
it causes the round complexity of the resulting algorithm to be twice
the round complexity of the reference algorithm, $R$,
and its degradation 
 function to be twice the round complexity, $f(\mu)$, of the measure-uniform algorithm, $U$.
Provided $R$ is asymptotically optimal, it also ensures robustness.
In addition, if $f(\mu)$,
does not grow too quickly, then the resulting algorithm is smooth with respect to $R$.
In contrast to the Simple Template, we can obtain both robustness and linear degradation simultaneously for MIS.
It requires that both $R$ and $U$ can be divided into phases with the same upper bound on their round complexities, which can be computed by all nodes.
Note that the Greedy MIS Algorithm and the other measure-uniform algorithms described in Section \ref{section-other} all work in phases consisting of a small constant number of rounds.
Hence, they can easily be used as $U$ in the Interleaved Template.
As compared to the Consecutive Template, the resulting algorithm may terminate earlier, since the reference algorithm may reduce the value of the measure of
the remaining problem significantly each round
or the reference algorithm might terminate earlier than the known upper bound, possibly due to good predictions.

\subsection{Parallel Template}
\label{parallel-template}

Using a measure-uniform algorithm with round complexity 
$f(\mu)$,
in either the Consecutive or the Interleaved Template gives an algorithm with predictions that is $2f(\eta)$-degrading,
where $\eta$ is the maximum of $\mu$ over all error components.
We show that, under certain conditions, the factor of 2 can be removed.

The solutions to some graph problems remain solutions when nodes are removed from the graph.
For example, a coloring of a graph is still a coloring of any induced subgraph.
Similarly, if $M$ is an independent set 
 of a graph $G$ and $G' = (V',E')$ is an induced subgraph of $G$,
then $M\cap V'$ is an independent set 
of $G'$.
However, a maximal independent set 
may no longer be maximal
when nodes are removed. For example, just considering two nodes with an edge between them, a set consisting of one of these 
nodes is a maximal independent set, but, if that node is removed, the empty set is not a maximal independent
set of the remaining graph.

Consider a graph problem whose solutions remain solutions when nodes
are removed.  An algorithm for such a problem is \emph{fault-tolerant}
if it runs in a model where nodes can crash and, at any point, the
partial solution output by the nodes that have not crashed is a
partial solution to the problem on the subgraph induced by these
nodes.

Suppose that a reference algorithm, $R$, has two
parts, the first of which is fault-tolerant.
The second part can be empty, if the entire reference algorithm is fault-tolerant.
The idea of the \para is to run the first 
part of $R$ in parallel with a measure-uniform algorithm, $U$.
Even though the model assumes that 
nodes
are nonfaulty, a  node
that terminates while it is performing the measure-uniform algorithm
can be treated as 
if it had crashed
by the reference algorithm.
Nodes
must know a good upper bound, $r_1(n,\Delta,d)$ on the round complexity of the first part
of the reference
algorithm.
To avoid interference between the two algorithms,
if the first part of the reference algorithm outputs values, it is modified so that
 these  values are, instead, stored locally by nodes.
While performing the measure-uniform algorithm, nodes still output values.
A node that terminates while performing the measure-uniform algorithm
no longer participates in the reference algorithm and it is treated as if it has crashed.

After the first part
 of the reference algorithm is known to have completed,
the partial solution produced by the measure-uniform
algorithm is made extendable by running a clean-up algorithm.
The locally stored outputs, if any, computed by the first part
of the reference algorithm
are output. Then the second part
of the reference algorithm is performed.

Although a measure-uniform  algorithm with round complexity $f(\mu)$ will terminate
within $f(\eta)$ rounds 
when $f(\eta) \leq r_1(n,\Delta,d)$,
it is not clear
that it will have made
enough progress on the problem instance
if it does not terminate.
So, even if part 2 of $R$ has round complexity at most $f(\mu)$,
it only follows that the round complexity of the resulting algorithm with predictions
is at most $c(n)+ r_1(n,\Delta,d) + f(\eta) \leq 2f(\eta)$.
To ensure that its round complexity is at most $c(n) + f(\eta) +O(n)$ and, hence, that it is $f(\eta)$-degrading,
we introduce another property.

A measure-uniform algorithm, $U$, with round complexity at most $f(\mu)$ \emph{makes steady progress with respect to} $\mu$
if, for every connected 
graph $G$ and every $1 \leq r < f(\mu(G))$, each component, $S$, of the subgraph  induced by the set of active nodes at the 
end of round $r$ when $U$ is performed on $G$ has the property that
$r + f(\mu(S)) \leq f(\mu(G)) + O(1)$.
Since the round complexity of the Greedy MIS Algorithm is at most $\mu_1$ and, in every round, at least one node in every component
terminates, it makes steady progress with respect to $\mu_1$.
Since the round complexity of the Greedy MIS Algorithm is at most $\mu_2+1$ and in every other round the Greedy MIS algorithm adds at least
  one node to the maximal independent set, the size of the remaining independent
  set and the remaining vertex cover are decreased by at least $1$ every 
  second round (as explained in the proof of Lemma~\ref{greedyMISuppermu2}).
  Thus, $\mu_2$ is decreased by at least two every other round, so the
Greedy MIS algorithm also makes steady progress with respect to $\mu_2$.

An alternative way to consider steady progress is as follows.
Since $f$ gives an upper bound on the round complexity 
of a measure-uniform algorithm as a function
of the work to be done, as quantified by our measure $\mu$,
$f^{-1}$ gives a lower bound on the work accomplished 
by the measure-uniform algorithm (again, as quantified
by $\mu$) as a function of the round complexity.
Suppose that the algorithm is performed on a connected graph, $G$, and let $G_r$ denote
the subgraph induced by the set of active  nodes at the end of round $r$.
The remaining amount of work to be done is $\max\{ \mu(S)\ | \ S \mbox{ is a component of } G_r\}$
and $\mu(G) - \max\{ \mu(S)\ | \ S \mbox{ is a component of } G_r\}$ is the amount of work
 that has been accomplished.
 Hence, $f^{-1}(r) \leq \mu(G) - \mu(S)$ for every component $S$ of $G_r$.
 If $f$ is  superadditive,
 then this condition implies that the algorithm makes steady progress with respect to $\mu$:
Because $f$ is assumed to be nondecreasing, $r \leq f(\mu(G) - \mu(S))$ for every component $S$ of $G_r$.
So, by superadditivity,
$r + f(\mu(S)) \leq f(\mu(G) - \mu(S)) + f(\mu(S))
\leq f(\mu(G) - \mu(S) + \mu(S)) = f(\mu(G))$.

\begin{algorithm}
\caption{Parallel Template}\label{parallel_template}
\begin{algorithmic}[0]
  \State Run a \good initialization algorithm, $B$
  \For{$r_1(n,\Delta,d)$ rounds, in parallel}
     \State Run the next round of the measure-uniform algorithm, $U$
     \State Run the next round of part 1 of the reference algorithm, $R$,
     \State storing any outputs locally
  \EndFor
  \State Run the clean-up algorithm, $C$
  \State Output any locally stored outputs
  \State Run part 2 of $R$
\end{algorithmic}
\end{algorithm}

\begin{lemma}
\label{parallel}
Suppose we have a reference algorithm, $R$, with round complexity $r(n,\Delta,d)$,
that has a fault-tolerant first part
with round complexity 
at most  $r_1(n,\Delta,d) \leq r(n,\Delta,d)$, which
is known to all nodes,
a measure-uniform algorithm, $U$,
with  round complexity at most $f(\mu)$,
a reasonable initialization algorithm, $B$, with round complexity $c(n) \in O(r(n,\Delta,d)$, and
a clean-up algorithm, $C$,  with round complexity $c'(n)\in O(r(n,\Delta,d))$.
If either the round complexity of $C$ followed by the second part of $R$ is constant
or $U$ makes steady progress with respect to $\mu$ and 
the round complexity of $C$ followed by the second part of $R$ is at most $f(\mu) +O(1)$, 
then
the algorithm with predictions obtained using the Parallel Template (Algorithm~\ref{parallel_template}) has consistency $c(n)$, 
is robust with respect to $R$, and
is $f(\eta)$-degrading,  where $\eta$ is the maximum of $\mu$ over all error components.
\end{lemma}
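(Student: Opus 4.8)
The plan is to verify separately the three claimed properties — consistency $c(n)$, robustness with respect to $R$, and $f(\eta)$-degradation — after first checking that the resulting algorithm is correct. Consistency is immediate: when $\eta=0$ there are no error components, so every node terminates during $B$, outputting its prediction, within $c(n)$ rounds. For correctness I would track extendability of the partial solution across the stage boundaries. Running part~1 of $R$ alongside $U$ does not alter $U$'s execution; it merely feeds $R$ the crash events, which are exactly the $U$-terminations, and since the problem's solutions are preserved under node removal and part~1 of $R$ is fault-tolerant, part~1 of $R$ produces a valid partial solution on the subgraph of nodes active for $R$. After the parallel phase, $C$ restores extendability of $U$'s (and $B$'s) partial solution, the locally stored part~1-of-$R$ outputs are released at the nodes $C$ has not already terminated, and part~2 of $R$ completes the solution on what remains. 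The two facts doing the work here are that $R$'s crashes coincide with $U$-terminations and that $C$ guarantees extendability before part~2 runs.

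For robustness I would just add the phases. $B$ takes $c(n)\in O(r(n,\Delta,d))$ rounds; the parallel phase takes at most $r_1(n,\Delta,d)\le r(n,\Delta,d)$ rounds; $C$ takes $c'(n)\in O(r(n,\Delta,d))$ rounds; and part~2 of $R$, being a part of $R$, takes at most $r(n,\Delta,d)$ rounds. Hence the total is $O(r(n,\Delta,d))$, so the algorithm is robust with respect to $R$.

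For degradation I would split into cases. The active subgraph after $B$ is contained in the union of the error components, so $U$, run on it, terminates every node within $f(\eta)$ rounds of the parallel phase. Thus if $f(\eta)\le r_1(n,\Delta,d)$ the algorithm finishes within $c(n)+f(\eta)$ rounds. Otherwise $f(\eta)>r_1(n,\Delta,d)$ and the parallel phase runs its full $r_1(n,\Delta,d)$ rounds before $C$ and part~2 of $R$ run. If $C$ followed by part~2 of $R$ takes $O(1)$ rounds, the total is $c(n)+r_1(n,\Delta,d)+O(1)<c(n)+f(\eta)+O(1)$, using $r_1(n,\Delta,d)<f(\eta)$. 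In the steady-progress sub-case, let $E$ be the error component with $f(\mu(E))=f(\eta)$; error components $E'$ with $f(\mu(E'))\le r_1(n,\Delta,d)$ leave nothing active, and for the rest, steady progress with $r=r_1(n,\Delta,d)$ gives $r_1(n,\Delta,d)+f(\mu(S))\le f(\mu(E'))+O(1)\le f(\eta)+O(1)$ for every leftover component $S$. Since $C$ followed by part~2 of $R$ runs (in parallel over components) in at most $\max_S f(\mu(S))+O(1)\le f(\eta)-r_1(n,\Delta,d)+O(1)$ rounds, the total beyond $B$ is at most $r_1(n,\Delta,d)+f(\eta)-r_1(n,\Delta,d)+O(1)=f(\eta)+O(1)$; adding $c(n)$ gives the claimed $c(n)+f(\eta)+O(1)$, so the algorithm is $f(\eta)$-degrading in every case.

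The main obstacle I expect is this last sub-case: one must apply the steady-progress definition to each error component separately, dispose of the components that $U$ already finishes within the parallel phase, and notice that the hypothesis bounds the cost of $C$ followed by part~2 of $R$ in terms of the measure of the \emph{leftover} subgraph, not of the original error component — it is steady progress that converts one into the other, and it is essential that the round complexities of $R$ and $U$ are nondecreasing so that interleaving and parallelizing them do not increase the work either must do. A secondary subtlety lies in the correctness argument: releasing the locally stored part~1-of-$R$ outputs after $C$ must not break feasibility at a node some of whose $U$- or $C$-decided neighbours part~1 of $R$ never saw, which is exactly where extendability of the cleaned-up partial solution and fault-tolerance of part~1 of $R$ are needed together.
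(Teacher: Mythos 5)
Your proposal is correct and follows essentially the same route as the paper: the same three-way case split on whether $U$ finishes during the parallel phase, whether the tail is constant, or whether steady progress plus the $f(\mu)+O(1)$ bound on the tail is invoked, with the same telescoping of $r_1(n,\Delta,d)$ against the steady-progress inequality. The only nitpick is that steady progress should be applied to the components of the subgraph left active after $B$ (where $U$ actually runs), not to the error components themselves; the inequality you want then follows because $B$ is \good, $\mu$ is monotone, and $f$ is nondecreasing.
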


\begin{proof}
The resulting algorithm has round complexity at most $c(n)+c'(n)+r(n,\Delta,d) \in O(r(n,\Delta,d))$, so it is robust with respect to $R$.

If the resulting algorithm terminates during the for loop, then the number of rounds performed
is at most $c(n) + f(\eta)$.

Otherwise,  $f(\eta) >  r_1(n,\Delta,d)$. 
If the round complexity of $C$ followed by the second part of $R$ is  constant,
then the number of rounds performed is at most $c(n) + r_1(n,\Delta,d) +O(1) < c(n) + f(\eta)  +O(1)$.

Now suppose that $U$ makes steady progress with respect to $\mu$
and that
the round complexity of $C$ followed by the second part of $R$ is at most $f(\mu) +O(1)$. 
Let $G'$ be the subgraph induced by the active nodes at the end of the initialization algorithm, $B$, and let
$G''$ be the subgraph induced by the active nodes at the end of the for loop.
Then the  number of rounds performed is at most $c(n) + r_1(n,\Delta,d) + 
 \max\{f(\mu(S''))\ |\ S'' \mbox{ is a component of } G''\} +O(1)$.
Note that  $U$ runs independently on each component of $G'$, so
each component $S''$ of $G''$ is
an induced subgraph of some component, $S'$, of $G'$.
Since $U$ performs  $r_1(n,\Delta,d)$ rounds during the for loop and  makes steady progress
with respect to $\mu$,
it follows that $r_1(n,\Delta,d) + f(\mu(S'')) \leq f(\mu(S')) + O(1)$.
Hence,
 the number of rounds performed is at most
 \begin{eqnarray*}
& & c(n) + r_1(n,\Delta,d) + 
 \max\{f(\mu(S''))\ |\ S'' \mbox{ is a component of } G''\}
  + O(1)\\
&  \leq  & c(n) +  \max\{f(\mu(S'))\ |\ S' \mbox{ is a component of } G'\} +O(1).
 \end{eqnarray*}
Since $f$ is nondecreasing, this is  at most
$c(n) + f(\eta) + O(1)$.

Hence, in all
cases, the number of rounds performed is at most $c(n) + f(\eta)  +O(1)$.
It follows that the algorithm with predictions is $f(\eta)$-degrading.
\end{proof}

As an example of how this template is used, we let $R$ be the reference algorithm
whose first part is a fault-tolerant  ($\Delta$+1)-Vertex Coloring algorithm~\cite{L92,BE13,BEG22} 
with round complexity $r_1(n,d,\Delta)\in O(\Delta+\log^*d)$
and whose second part is
a simple greedy algorithm~\cite{BE13} that produces a maximal independent set from this coloring in 
$\Delta$ rounds.
The second part considers the colors
1 to $\Delta$
one at a time:
In round $1 \leq i \leq \Delta$, each 
active
node
that has color $i$ and has not received 1 from any of its neighbors
sends the message 1
to each of its neighbors (indicating that it is  in the maximal independent set), outputs 1, and terminates.
If $1 < i < \Delta$ and
an active node received
1 from at least one neighbor in 
round $i-1$,
it sends the message 0 to each of its neighbors (indicating that it is not in the maximal independent set)
outputs 0, and terminates.
Note that, after round $i < \Delta$, all active nodes have colors greater than $i$
and each node knows which of its neighbors are active.
In round $\Delta$, each active node outputs 0 and terminates if it received 1 in this round or the previous round.
Each other node with color $\Delta+1$ outputs 1 and terminates in round $\Delta$, but does not send any messages.
The correctness of this algorithm relies on the fact that  the set of nodes with the same color is an independent set.

If $\Delta$ is constant, then the second part of $R$ has constant round complexity.
Otherwise, to ensure that the resulting algorithm is both $\miseta$-degrading and $\eta_2$-degrading, we can combine 
it with the Greedy  MIS Algorithm,
as follows: 
In addition,
in round $1 \leq i < \Delta$, each active node with color 
greater than $i$ that has not received 1 from any of its neighbors,
has no active neighbor with color $i$,
and whose identifier is larger than the identifiers of all its active neighbors
sends 1 to each of its neighbors
(indicating that it is in the maximal independent set), 
outputs 1, and terminates.
Nodes that receive this message also send the message 0 to each of their neighbors, output 0, and terminate.

\begin{corollary}
  \label{cor-parallel}
  Using the Parallel Template with the MIS Initialization Algorithm as the initialization
  algorithm, $B$,
  the Greedy MIS Algorithm as the measure-uniform algorithm, $U$,
 and the two-part reference algorithm, $R$, 
  just described, gives an algorithm with predictions for MIS that has
  consistency~3 and round complexity $\min\{ \eta_2+4, O(\Delta+\log^*d) \}$.
It is both $\eta_1$-degrading and $\eta_2$-degrading.
When $\Delta \in O((\log n)/\log\log n)$, it is
 robust and smooth.
\end{corollary}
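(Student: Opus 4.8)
The plan is to apply Lemma~\ref{parallel} (the Parallel Template lemma) twice --- once with $\mu = \mu_1$ and once with $\mu = \mu_2$ --- for the instantiation named in the statement, and then to read off the explicit round bound and the robustness and smoothness claims. First I would discharge the hypotheses that do not involve $\Delta$. The MIS Initialization Algorithm is reasonable with round complexity $c(n) = 3$ (Section~\ref{initialization}); the one-round MIS clean-up algorithm described for the Consecutive Template plays the role of $C$, so $c'(n) = 1 \in O(r(n,\Delta,d))$. Part~1 of $R$ is the $(\Delta+1)$-vertex coloring algorithm: it has round complexity $r_1(n,\Delta,d) \in O(\Delta + \log^*d)$, a value every node can compute from $\Delta$ and $d$, and it is fault-tolerant because the restriction of a proper coloring to any induced subgraph is again a proper coloring, so the nodes still running it always hold a valid partial coloring even as nodes that finish inside $U$ are treated as crashed. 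The measure-uniform algorithm $U$ is the Greedy MIS Algorithm, which by Lemmas~\ref{greedyMISuppermu1} and~\ref{greedyMISuppermu2} has round complexity at most $\mu_1$ and at most $\mu_2 + 1$, is measure-uniform with respect to both, and, as argued just before the Parallel Template, makes steady progress with respect to both $\mu_1$ and $\mu_2$.

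The only hypothesis of Lemma~\ref{parallel} needing real work is the one on ``$C$ followed by part~2 of $R$'', and this is where the case split in the statement enters. If $\Delta$ is constant, part~2 of $R$ --- the $\Delta$-round color sweep turning a coloring into a maximal independent set --- runs in $O(1)$ rounds, so $C$ followed by part~2 has constant round complexity and the first alternative of the hypothesis holds outright. If $\Delta$ is not constant, I would use the modified part~2, which also applies the Greedy MIS rule (a node joins the independent set once it is not yet excluded, has no active neighbor of the current color, and has the largest identifier among its active neighbors). The point to verify is that this modified sweep, restricted to any component $S''$ of the graph left after the parallel phase, is still correct and behaves at least as well as the Greedy MIS Algorithm with respect to progress: in every round at least one node of each remaining component joins or leaves the set, and every other round the vertex cover (equivalently the independent set) of each remaining component shrinks, so part~2 terminates within $\min\{\Delta,\mu_1(S'')\} + O(1)$ rounds and within $\mu_2(S'') + O(1)$ rounds; prepending $C$ keeps this at $\mu_1(S'') + O(1)$ and $(\mu_2(S'')+1) + O(1)$. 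Together with steady progress of $U$, this is exactly the second alternative of the hypothesis with $f = \mu_1$ and with $f = \mu_2+1$. In both cases Lemma~\ref{parallel} gives consistency $c(n) = 3$, robustness with respect to $R$, and $\mu_1(\cdot)$- and $(\mu_2(\cdot)+1)$-degradation over error components --- that is, the algorithm is $\eta_1$-degrading and $\eta_2$-degrading.

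It remains to pin down the round bound and the last sentence. Robustness with respect to $R$, whose total round complexity is $c(n) + r_1(n,\Delta,d) + c'(n) + \Delta = O(\Delta + \log^*d)$, gives that term. For the $\eta_2 + 4$ term: when the Greedy MIS Algorithm terminates during the for loop, the algorithm stops within $c(n) + \max_S(\mu_2(S)+1) \le 3 + \eta_2 + 1 = \eta_2 + 4$ rounds, using monotonicity of $\mu_2$ and reasonableness of $B$ to get $\mu_2(S) \le \eta_2$ for each component $S$ of the graph left after $B$; otherwise the for loop has run $\Omega(\Delta + \log^*d)$ rounds of $U$ on the heaviest component, steady progress has already driven $\mu_2$ of the remaining components down by that much, and a careful accounting of the additive constants in the degradation bound established above again yields at most $\eta_2 + 4$ rounds. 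Taking the minimum gives the stated complexity. Finally, when $\Delta \in O((\log n)/\log\log n)$ and $d \in n^{O(1)}$ (so $\log^*d = O(\log^*n)$), the $\Omega(\min\{\Delta + \log^*n,\ (\log n)/\log\log n\})$ lower bound for the MIS problem in the LOCAL model matches $R$'s $O(\Delta + \log^*n)$ round complexity, so $R$ is asymptotically optimal and the algorithm is therefore robust (not merely robust with respect to $R$); since it is also consistent --- its consistency $3 = O(1)$ is within a constant factor of the $O(1)$ rounds needed to check a predicted MIS --- and is $O(\eta)$-degrading, it is smooth. I expect the only genuinely nontrivial step to be the correctness-plus-progress analysis of the modified part~2 of $R$ in the non-constant-$\Delta$ case, and, secondarily, tracking the additive constants tightly enough to land the concrete ``$+4$'' rather than a generic $O(1)$.
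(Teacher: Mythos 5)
Your proposal is correct and follows essentially the same route as the paper: instantiate Lemma~\ref{parallel} with the Greedy MIS Algorithm (round complexity $\mu_2+1$, steady progress with respect to $\mu_2$) and verify that the modified part~2 of $R$ adds a node to the independent set at least every other round, so its round complexity on any remaining component $S$ is at most $\mu_2(S)+1$. The only cosmetic deviation is that you invoke the one-round clean-up algorithm ($c'(n)=1$), whereas the paper chooses $r_1(n,\Delta,d)$ to be even so that the partial solution after the parallel phase is already extendable and no clean-up is needed ($c'(n)=0$), which is how it lands the exact additive constant $4$ without the extra bookkeeping you defer to ``careful accounting.''
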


\begin{proof}
The MIS Initialization Algorithm consists of 3 rounds.
Since part 1 of $R$ performs $O(\Delta+\log^*d)$ rounds and part 2 of $R$ performs at most $\Delta$ rounds,
the resulting algorithm with predictions has round complexity $O(\Delta+\log^*d)$. 

The partial solution output by the Greedy MIS Algorithm at the end of every even numbered round is extendable,
so we choose $r_1(n,d,\Delta)$ to be even. Then no clean-up algorithm is needed.
So, when we later apply Lemma~\ref{parallel}, $c'(n)=0$.

Consider any component $S$ of the graph induced by the active nodes at the beginning of round $i$ of  part 2 of $R$.
All nodes in $S$ that received 1 in the previous round output 0 and terminate.
Suppose no nodes in $S$ received 1 in the previous round. 
Then all nodes of color $i$ in $S$ are added to the maximal independent set in round $i$.
If $S$ contains no nodes of color $i$, then the node in $S$ with the largest identifier
has no active neighbor with color $i$, so it is added to the maximal independent set in round $i$.
All nodes added to the maximal independent set in round $i$ send 1 to all their neighbors and terminate.
Thus, at least one node in $S$ terminates during round $i$. If no node in $S$ is added to the maximal independent set
during round $i$, then no node in $S$ receives 1 in round $i$, so at least one node is added to the maximal independent set
in every component of the subgraph of $S$ induced by the active nodes at the beginning of round $i+1$.
Therefore, a node is added to the independent set at least every other round.

As explained in the proof of Lemma~\ref{greedyMISuppermu2}, it follows that
the size of the remaining independent
set and the remaining vertex cover are decreased by at least 1 every 
other round, so
the round complexity  of the second part of $R$ is bounded above by $\mu_2(S) + 1$
when performed on a component, $S$, of the graph induced by the active nodes at the beginning of part 2 of $R$.
Since the Greedy MIS algorithm also 
has round complexity at most $\mu_2 +1$ and makes steady progress with respect to $\mu_2$, 
Lemma~\ref{parallel} implies that the resulting algorithm with predictions has round complexity at most $3 + \eta_2 + 1$.
Hence, it is $\eta_2$-degrading. 
In addition,
because $\eta_2 \leq \miseta$, 
this algorithm is
$\eta_1$-degrading.
\end{proof}

\paragraph{Summary.}
The Parallel Template ensures consistency. Unlike the
Consecutive and Interleaved Templates, 
it does not cause the round complexity of the resulting algorithm to be twice the round complexity of the reference algorithm, $R$.
Provided $R$ is asymptotically optimal, it also ensures robustness.
It requires that the first part of $R$ is fault tolerant and that all nodes know a good upper bound on the round complexity of this part.
If the clean-up algorithm followed by the second part of $R$ has constant round complexity,
then the resulting algorithm is $f(\eta)$-degrading, where $f(\mu)$ is an upper bound on the round complexity of 
the measure-uniform algorithm, $U$.
If the clean-up algorithm followed by the second part of $R$ has round complexity bounded above by $f(\mu)$
and $U$ makes steady progress with respect to $\mu$, then the resulting algorithm is also $f(\eta)$-degrading.
In both cases, provided $f$ does not grow too quickly, the algorithm is smooth with respect to $R$.

\section{Applying Templates to Other Problems}
\label{section-other}

We now consider three additional graph problems and extendable partial
solutions for them.
The  Maximal Matching, $(\Delta+1)$-Vertex Coloring,
and $(2\Delta-1)$-Edge Coloring problems are known to be very closely related
to the MIS problem. 
Algorithms for MIS often lead to similar 
algorithms for the other problems.
For all three problems,
each  node
is given a prediction for its output.

We 
give
definitions of 
extendable partial
solutions, initialization algorithms, clean-up algorithms,
and error measures
for each of these problems.
Based on that, one can then choose one's favorite algorithm for 
the problem and use that as the reference algorithm.
Which template(s) from Section~\ref{template} 
can be used will, of course, depend on the properties
of the reference algorithm that is chosen.

\subsection{Maximal Matching}

A \emph{matching} of a graph is a subset $M$ of its edges such that no two edges in $M$
are incident to the same node. It is \emph{maximal} if it is contained in no other matching.
In the \emph{Maximal Matching problem},
the output $y_i$ of 
node
$p_i$ 
 is either the neighboring node to which it is matched
 or $\bot$, indicating that it is unmatched.
 When all 
 nodes
 have terminated, $y_i = j$ if and only if $y_j = i$
 and we say the nodes $p_i$ and $p_j$ are \emph{matched}.
 If $y_i=\bot$, then all nodes adjacent to $p_i$  have been matched.
Given any matching $M$ of a graph $G$, consider the set of nodes $N$ that are incident to some edge in $M$
and the set of nodes $N'$ that are not in $N$, but all of whose neighbors are in $N$.
If $M'$ is any maximal matching of the graph
 obtained by removing $N \cup N'$ from $G$, then $M' \cup M$ is a maximal matching of $G$.
 Thus, an extendable partial solution can be obtained by setting $y_i = j$ and $y_j = i$ for all edges $\{p_i,p_j\} \in M$
 and setting $y_i = \bot$ for all nodes $i \in N'$.

The Maximal Matching Base Algorithm proceeds as follows:
Every 
node
sends 
its prediction to each of its neighbors. Then each 
node
$p_i$ with
prediction $x_i = j \neq i$  checks that $x_j = i$.
If so, it informs its other neighbors, outputs the
match, and terminates.
Each 
node
with prediction~$\bot$ outputs $\bot$ and terminates
if it learns that all of its neighbors have been matched.
A 
node
that outputs $\bot$ does not need to inform its neighbors,
since they have all terminated. Thus, this pruning algorithm takes only two rounds.

Another initialization algorithm that is always at least as
  good as the Maximal Matching Base Algorithm has a node output $\bot$, even if its prediction is not $\bot$,
provided all its neighbors are all matched.
Note that this
is a \good initialization algorithm, but it is not a pruning algorithm.

The error components are the components of the
subgraph induced by the nodes that would not have produced output
after running the base algorithm, and $\eta_1$ can be used as the error measure.

In a clean-up algorithm for the Maximal Matching problem,
it suffices that every active node which has been matched to a neighbor informs its other neighbors that it has been matched, outputs the identifier of
the neighbor to which it has been matched, and terminates.

The Maximal Matching problem has a measure-uniform algorithm that is  similar to the one for the Maximal Independent Set problem.
Rounds are divided into groups of three. In the first round of a group,
 if the identifier of an active  
 node
 is larger than
  the identifiers of all of its active neighbors, 
it proposes to 
its active neighbor with the smallest identifier
that they be matched.
Each 
node
that receives a proposal chooses 
the one from the neighbor with the largest identifier
and, in the subsequent  round,
notifies that neighbor that they are matched.
Each matched node informs its active neighbors in the third round and then terminates.
When an active
node
receives a message during the third round, it removes the 
node
from which
it received the message as a neighbor. If it no longer has any active neighbors, it outputs $\bot$ and terminates.
At the end of each group of three rounds, each component of the remaining graph
has at least two nodes. In the next group of three rounds, at least two of them are matched and terminate.
If this component has exactly three nodes, then
the unmatched node also terminates.
 Thus, this algorithm has 
 round complexity at most $3\lfloor s/2 \rfloor$ when performed on a component with $s\geq 2$ nodes.
 When $s= 1$, the node
 knows that it has no active neighbors, so it simply outputs $\bot$ and terminates.
We also show that this measure-uniform algorithm has asymptotically optimal round complexity.

\begin{lemma}
Every deterministic measure-uniform algorithm for computing a maximal matching of an $n$ node line requires at least $(n-3)/2$ rounds,
if its round complexity does  not depend on the domain from which the node identifiers are chosen
or $d$ is a sufficiently large function of $n$.
\end{lemma}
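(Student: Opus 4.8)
The plan is to reduce to Lemma~\ref{css-vc}, in the same spirit as the proof of Lemma~\ref{css-mis}: I would show that a fast maximal matching algorithm on the line yields an equally fast algorithm for $3$-colouring the line. Suppose, for contradiction, that $A$ is a deterministic measure-uniform algorithm that computes a maximal matching of the $n$-node line in $T(n) < (n-3)/2$ rounds. Since $A$ is measure-uniform and the $n$-node line is a single component with $\mu_1 = n$, the value $T(n)$ is a function of $n$ alone and, in particular, does not depend on $d$; the same holds under the alternative hypothesis that $d$ is a sufficiently large function of $n$.

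From $A$ I would build a colouring algorithm as follows: run $A$, and then have each node $p_i$ output colour $1$ if its output $y_i$ is the identifier of its left neighbour, colour $2$ if $y_i$ is the identifier of its right neighbour, and colour $3$ if $y_i = \bot$. No additional round of communication is needed, because every node already knows the identifiers of its (at most two) neighbours and can therefore determine its colour from $y_i$ alone. Hence this colouring algorithm also runs in $T(n)$ rounds, and its round complexity inherits from $A$ the property of not depending on $d$ (or of working for all sufficiently large $d$).

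Next I would verify that, whenever the matching produced by $A$ is correct, the resulting colouring is proper, by a short case analysis on an arbitrary pair of adjacent nodes $u$ (the left one) and $v$ (the right one). If both receive colour $3$, then $y_u = y_v = \bot$, contradicting maximality, since an unmatched node must have all of its neighbours matched. If both receive colour $1$, then $y_v$ equals the identifier of $u$, which (for the output to be a valid matching) forces $y_u$ to equal the identifier of $v$; but colour $1$ at $u$ means $y_u$ is the identifier of $u$'s left neighbour, which is not $v$. The case in which both receive colour $2$ is symmetric. So adjacent nodes always receive distinct colours.

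Finally, $T(n) < (n-3)/2$ contradicts Lemma~\ref{css-vc}, so $A$ must use at least $(n-3)/2$ rounds. The argument is routine once the reduction is set up; the one point that needs care is the observation that, unlike the MIS-to-colouring reduction inside Lemma~\ref{css-mis}, no extra broadcast round is required here, which is precisely why the lower bound matches the $(n-3)/2$ of Lemma~\ref{css-vc} rather than being smaller by an additive constant. If a self-contained argument is preferred, one can instead repeat the Ramsey construction of Lemma~\ref{css-vc} directly: the output of a node far from both ends, viewed over strictly increasing labellings, is a function of the $(2T(n)+1)$-element set of identifiers it sees, taking one of the three values ``matched left'', ``matched right'', ``$\bot$''; Theorem~\ref{ramseythm} then produces two adjacent such nodes with identical output, and each of the three possibilities contradicts the output being a valid, maximal matching. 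In either route, the main thing to keep straight is the handling of the boundary cases (nodes near the ends, and the line being long enough).
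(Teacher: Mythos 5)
Your proof is correct and takes essentially the same approach as the paper: it reduces to the 3-coloring lower bound of Lemma~\ref{css-vc} by having each node locally translate its matching output (matched to left neighbor, matched to right neighbor, or unmatched) into one of three colors, observing that no extra communication round is needed. The paper's proof is simply a terser statement of this same reduction, assigning colors by whether a node is the left endpoint of a matched edge, the right endpoint, or unmatched.
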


\begin{proof}
To obtain a contradiction,
suppose there is a deterministic measure-uniform algorithm $A$ that always computes a maximal matching of this graph in fewer than $(n-3)/2$ rounds. 
Have the left endpoint of each matched edge output 0, have the right endpoint of each matched edge output 1, and have each unmatched node output 2.
This is an algorithm that colors the nodes of an $n$ node line with 3 colors in fewer than $(n-3)/2$ rounds, contradicting Lemma~\ref{css-vc}.
\end{proof}

\subsection{\boldmath$(\Delta +1)$-Vertex Coloring}
For the \emph{$(\Delta +1)$-Vertex Coloring problem}, the output $y_i$ of node
$p_i$ is a color in $\{1,\ldots, \Delta+1\}$ such that no 
node
has the same color as any of its neighbors.
It  is a special  case of the \emph{list vertex coloring}
problem: each node, $p_i$, has a 
subset of the colors $\{1,\ldots, \Delta+1\}$,
called  its \emph{palette}, which is larger than the degree of its node.
The requirement is for each node to output
a color $y_i$ from its palette, so that no node has output the same color as any of its neighbors.
A partial solution to this  problem 
 consists of the outputs of  a subset of nodes,
each with a color from its palette,
so that no node has output the same color as any of its neighbors.
A partial solution is extendable if the palette of each active node consists of the colors in $\{1,\ldots, \Delta+1\}$ that have not been output by any of its neighbors.
This is because each node
has more colors in its palette than its degree,
so even if all the neighbors of a node have output a color, at least one color remains in its palette.

The $(\Delta+1)$-Vertex Coloring Base Algorithm proceeds as follows:
Every node sends 
its prediction to each of its neighbors. If
its prediction is different from those of all of its neighbors,
it informs all its neighbors, outputs its predicted
color, and terminates.
Each active node that is informed removes this color from its palette.
This algorithm takes two rounds.
There is another initialization algorithm for $(\Delta +1)$-Vertex Coloring,
where, instead, 
a node outputs its predicted color provided
all of its neighbors with the same prediction have smaller identifiers.
This is also a pruning algorithm, and
the extendable partial solution obtained contains the one produced
by the $(\Delta+1)$-Vertex Coloring Base Algorithm, so it is a \good
initialization algorithm.

The error components are the components of the
subgraph induced by the nodes that would not have produced output
after running the base algorithm, and $\eta_1$ can be used as the error measure.

No clean-up algorithm is needed for the $(\Delta +1)$-Vertex Color problem.

A simple measure-uniform algorithm for the $(\Delta+1)$-Vertex Coloring problem
repeatedly has each active node with an
identifier larger than those of all of its active neighbors choose a color
from its palette,
inform its active neighbors of its chosen color, and terminate.
When informed, an active node removes the color from its palette and removes the node from which
it received the message as a neighbor.
In each round, at least one node in each component of the remaining graph terminates,
so the algorithm has round complexity at most $s$ when performed on
a graph with $s$ nodes.
By Lemma~\ref{css-vc} in Section~\ref{sec-css}, this measure-uniform algorithm has asymptotically optimal round complexity.

\subsection{\boldmath$(2\Delta -1)$-Edge Coloring}

For the \emph{$(2\Delta -1)$-Edge Coloring problem}, each node, $p_i$, must output
a different color, $y_{i,e} \in  \{ 1,\ldots,2\Delta-1\}$,
for each edge, $e$, incident to it.
A node can output the colors for different edges in different rounds.
It terminates when it has output an assignment for each of 
its incident edges.
Every pair of neighbors, $p_i$ and $p_j$, must
output the same color for the edge $\{p_i,p_j\}$.
The $(2\Delta -1)$-Edge Coloring problem is a special  case of the \emph{list edge coloring}
problem: 
each edge has a subset of the colors $\{1,\ldots, 2\Delta-1\}$,
which is larger than the number of other edges that are incident to the edge.
This palette is maintained by 
both of its incident nodes.
The requirement is for each node, $p_i$, to output
a color $y_{i,e}$ from the palette for each incident edge, $e=\{p_i,p_j\}$, 
so that $y_{i,e} = y_{j,e}$
and all its incident edges
have different colors.
A partial solution for this problem
on the graph $G=(V,E)$ consists of a subset of edges $E' \subseteq E$ 
such that, for each edge $\{p_i,p_j\}\in E'$, 
nodes
$p_i$ and $p_j$
have output the same color for this edge, for each edge $\{p_i,p_j \}\not\in E'$,
neither node $p_i$ nor $p_j$ has output a color for this edge,
and no node has output the same color for two of its incident edges.
A partial solution $E'$ is extendable if,
for each edge $\{p_i,p_j \}\not\in E'$, nodes $p_i$ and $p_j$
have the same palette for that edge,
which consists of those colors that have not been
output by either node.

The $(2\Delta-1)$-Edge Coloring Base Algorithm proceeds as follows:
Each node sends its predicted color for an edge to its neighbor incident
to that edge, provided that none of its other edges have the same
predicted color. If that neighbor sends the same color, 
they output that color for the edge.
If a node has output a color for each of its incident edges, then it
terminates at the end of round 1.
Otherwise,  it removes the colors it has 
output
from the palettes
for its uncolored edges and, in round 2, sends those colors along its uncolored edges.
A node that receives a color along an edge in round 2 also removes that color from the palette for that edge.
If the prediction is
correct, this pruning algorithm terminates in one round; otherwise it takes at
most two rounds.

For the $(2\Delta-1)$-Edge Coloring problem, the error components are the
components of the subgraph induced by the edges that would remain uncolored
after running the base algorithm, and $\eta_1$ can be used as the error measure.
One could instead use the maximum of the number of edges in the
error components as the error measure. This makes sense since coloring the edges of a graph is equivalent to  coloring the nodes of its line graph.
However, the number of edges in a component with $s$ nodes is
at least $s-1$ and may be much larger.
As mentioned in Section~\ref{intro-error-measures}, when possible, it is preferable to use an error measure that returns smaller values.

A clean-up algorithm for the $(2\Delta -1)$-Edge Color problem could simply have each active process send
any color it has output along its uncolored edges and remove any color that it
receives
along an edge from the palette for that edge.

A measure-uniform algorithm for the $(2\Delta-1)$-Edge Coloring problem is slightly more complicated than for the other three problems.
Each active node keeps track of 
which of the edges adjacent to it and its neighbors have been colored.
This information will be known at the beginning of the measure-uniform algorithm if, 
in round 2 of the initialization algorithm,
each node also sends, along each of its uncolored edges, the identifiers of all other nodes
at the endpoints of these edges.
In each odd-numbered round of the measure-uniform algorithm,
consider each active  node, $p_i$, whose identifier is  larger than the identifiers of the 
nodes
that can be reached from it by following at most two uncolored edges.
For each of its uncolored incident
edges, $e$,
$p_i$ chooses a different color from its palette for that
edge, sends that color to 
the other endpoint of $e$, sets $y_{i,e}$ to that
color and outputs it. Then it terminates.
A node, $p_j$, that receives a color 
$c$ along edge $e=\{p_i,p_j\}$,
sets $y_{j,e}$ to $c$ 
and outputs it.
If there are  no longer any uncolored edges incident to $p_j$,
it terminates.
Otherwise, $p_j$ removes $c$
from its palette for each uncolored edge incident to its node.
Then, in the subsequent even-numbered round,
$p_j$ informs each of its other neighbors to remove $c$
from the palette of
their shared edge and that edge $e$ is now colored.
At least one node terminates in every odd-numbered round.
If a component contains exactly two nodes, both 
terminate in round~1.
Thus, this algorithm has 
round complexity at most $2s - 3$ when performed on a graph with $s \geq 2$ nodes.
 A graph with 1 node has no edges, so no output is produced and this algorithm terminates in 0 rounds.
We also show that this measure-uniform algorithm has asymptotically optimal round complexity.

\begin{lemma}
Every deterministic measure-uniform algorithm for coloring the edges of  an $n$ node line with 3 colors requires at least $(n-3)/2$ rounds,
if its round complexity does  not depend on the domain from which the node identifiers are chosen
or $d$ is a sufficiently large function of $n$.
\end{lemma}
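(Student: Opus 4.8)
The plan is to mirror the reductions already used for Lemma~\ref{css-mis} and for the maximal matching lower bound: assume a fast measure-uniform algorithm for $3$-edge-coloring a line and convert it, by purely local post-processing, into an equally fast algorithm for $3$-coloring the \emph{nodes} of the line, contradicting Lemma~\ref{css-vc}.

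Concretely, I would suppose for contradiction that $A$ is a deterministic measure-uniform algorithm that always colors the edges of the $n$-node line with the colors $\{1,2,3\}$ in fewer than $(n-3)/2$ rounds. Since $A$ is measure-uniform, its round complexity does not depend on the domain from which the node identifiers are chosen, so the hypothesis of Lemma~\ref{css-vc} is satisfied. I would then run $A$ and, once it terminates, have each node output a color with no further communication: the leftmost node, which recognizes that it is an endpoint since it has no left neighbor, outputs the smallest color in $\{1,2,3\}$ that differs from the color $A$ assigned to its unique incident edge, and every other node outputs the color $A$ assigned to the edge joining it to its left neighbor. Because each node's incident-edge colors are among its own outputs under $A$, this conversion is instantaneous, so the resulting node-coloring algorithm still runs in fewer than $(n-3)/2$ rounds, and its round complexity still does not depend on $d$.

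The last step is to check that the output is a proper $3$-coloring of the nodes. Two adjacent nodes in positions $i$ and $i+1$ with $i \ge 2$ output the colors $A$ gave to the edges $\{i-1,i\}$ and $\{i,i+1\}$; these edges share the node in position $i$, so a proper edge coloring assigns them different colors. For the leftmost node and its neighbor, the neighbor outputs the color of the edge between them, which the leftmost node explicitly avoids. Hence adjacent nodes always receive distinct colors, and we obtain a deterministic $3$-coloring algorithm for the nodes of the $n$-node line that runs in fewer than $(n-3)/2$ rounds with round complexity independent of $d$, contradicting Lemma~\ref{css-vc}.

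I do not expect a genuine obstacle; the only points requiring care are the boundary node, handled by the special rule for the leftmost node, and verifying that the reduction introduces no extra rounds and preserves independence from $d$, both of which are immediate because the conversion is a local computation on values the nodes have already output. A direct Ramsey-theoretic argument in the style of Lemma~\ref{css-vc}, applied to the edge colors produced near the middle of the line, would also work, but the reduction above is shorter and matches the surrounding proofs.
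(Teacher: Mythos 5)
Your proof is correct and takes essentially the same approach as the paper: a zero-round local reduction from $3$-edge-coloring to $3$-node-coloring of the line, followed by an appeal to Lemma~\ref{css-vc}. The only (immaterial) difference is orientation --- the paper has each node except the last output the color of the edge to its right, with the last node as the special case, while you use left edges and treat the first node specially.
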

\begin{proof}
To obtain a contradiction,
suppose there is a deterministic measure-uniform algorithm $A$ that always colors the edges of this graph using  3 colors in fewer than $(n-3)/2$ rounds. 
Have each node, except the last, output the color of the incident edge to its right. Have the last node output one of the colors different 
from the color of the incident edge to its left. This is an algorithm that colors the nodes of an $n$ node line with 3 colors in fewer than $(n-3)/2$ rounds, 
contradicting Lemma~\ref{css-vc}.
\end{proof}

\section{Revisiting MIS: Black and White Components}
  \label{black-white}

  In this section, we consider 
  MIS algorithms using error measures based on
  the black and white components introduced in Section~\ref{error-measures}.
Recall that a {\em black (white) component} is a component of the graph induced by the nodes that have prediction 1 (0) and were active 
at the end of the base algorithm.
  For general graphs, we use the measure $\eta_{bw}$, which is
  the maximum number of nodes in any black or
  white component after running the MIS Base Algorithm.
  We also consider a different initialization algorithm and error measure for rooted trees.
  
\subsection{An MIS Algorithm with Predictions for General Graphs}
\label{general}
  First, we consider how to modify a measure-uniform algorithm
    for general graphs to take advantage of 
the partition into
black and white components.
  Suppose we have a measure-uniform algorithm, $U$, for a distributed graph algorithm
  that can be divided into short phases. For example, the
    Greedy MIS Algorithm from Section~\ref{sec-css} can be
    divided into phases of length~$2$, each of which 
ends with an extendable partial solution.
  Then we can obtain another measure-uniform
  algorithm, $U_{bw}$, by alternately running phases on the black nodes and the white nodes.
  When $U$ is running on the black (white) nodes,
  it ignores the white (black) nodes, except that, before a black (white) node outputs 1 and terminates,
  it informs all its active neighbors, including its white (black) neighbors.
  If necessary, at the end of each phase,
  a clean-up algorithm is performed, in which every active node 
  that is neighbor of a node that output 1 informs its neighbors,
  outputs 0 and terminates.
  (Note that this is not necessary
  with the Greedy MIS Algorithm, since the clean-up is part of
  each phase.)
The round complexity of $U_{bw}$ is at most twice the round complexity of $U$.
However, it can take significantly fewer rounds when
the black and white components are 
much smaller than
 the error components found by the MIS Initialization Algorithm, 
 as in Figure~\ref{fig:grid}.
This measure-uniform algorithm could be combined with a reference
 algorithm, using whichever template is appropriate.

\subsection{An MIS Algorithm with Predictions for Rooted Trees}
\label{mis-tree}

In a rooted tree, each node knows whether it is the root and, if not, which of its neighbors is its parent.
For a strict binary rooted tree (in which all nodes are either leaves or have exactly two children),
it is possible to compute a maximal independent set in only four rounds~\cite{BBCOSST23}.
In this paper, we consider rooted trees where each node can have any number of children.

When restricted to graphs that are rooted trees, we use a different error measure,  $\eta_t$,
which is the maximum number of nodes on any monochromatic path from a node obtained by following parent pointers
in the subgraph induced by the active nodes at the end of the MIS Base Algorithm.
In other words, $\eta_t$ is 1 plus the maximum of the heights of the black and white components.
Note that this can be much smaller than the number of nodes in the largest
error component, and it is never larger.
Thus, for every rooted tree, $\eta_{t}\leq \eta_{bw} \leq \miseta$.

For rooted trees, one can use an initialization algorithm that ensures that active black nodes
are not adjacent to active white nodes when it is complete.
Then active black nodes and white nodes can be processed in parallel, without interfering
with one another. A measure-uniform algorithm, $U$, can directly take advantage
of the fact that black and white components may be smaller than the components induced
by the set of active nodes at the end of the initialization algorithm.

The MIS Rooted Tree Initialization algorithm
is a \good initialization algorithm for the MIS problem on rooted trees.
Its first two rounds are the same as in the MIS Base Algorithm, except
that the independent set $I$ chosen consists of the black nodes
(which received 1 as their prediction) 
that do not have a black parent. 
This is a superset of the set $I$ chosen by the  MIS Base Algorithm. In particular,
nodes can output~$1$, even if some of their children  are also black.
In round 3, each white node that received a message in round 2 sends 0 to its neighbors, outputs 0, and terminates.
Also, each white node that did not receive a message in round 2 and does not have a white parent sends 1 to its neighbors,
outputs 1, and terminates. In round 4, each active node that received a 1 in round 3, sends 0 to its neighbors, outputs 0, and terminates.
A node knows that the nodes it has  received messages from in rounds 2 or later are no longer active.
At the end of the MIS Rooted Tree \initalg{}, the  components of the subgraph induced by the active nodes are  
monochromatic.

An example for which this initialization algorithm is much better than the MIS Initialization Algorithm is
a directed line consisting of $3k$ nodes, where the white nodes are those at distance 0 modulo 3 from the 
root of the graph.
The independent set computed by the MIS Base Algorithm is empty, so $\miseta=3k$.
In contrast, the independent set $I$ computed by the MIS Rooted Tree Initialization
Algorithm consists of the $k$ nodes at distance 1 modulo 3 from the 
root, so all nodes terminate by the end of round 2.
Also note that, for this graph, $\eta_t = 2$.

\begin{algorithm}[ht]
\caption{A measure-uniform Algorithm for MIS in a Rooted Tree, code for node $p_i$}
\label{css-tree}
\begin{algorithmic}[0]
\For {$r = 1,\ldots$}
\State {\bf round} $2r-1$:
\If {$p_i$ is the root of its component (i.e., it does not have an active parent)}
\State it notifies all its active children that it is a root, 
\State outputs 1, and terminates
\ElsIf {$p_i$ is a leaf (i.e., it has no active children)} 
\State it notifies its parent that it is a leaf
\If {$p_i$ receives a message that its parent is a root}
\State it outputs 0 and terminates
\Else \ it outputs 1 and terminates
\EndIf
\EndIf
\State {\bf round} $2r$:
\If {$p_i$ received a message from a neighbor during round $2r-1$}
\State it notifies all its active neighbors that it is not in the independent set,
\State outputs 0, and terminates
\Else
\For {each neighbor $p_j$ from which $p_i$ received a message during round $2r$}
\State it locally records that $p_j$ has terminated
\EndFor
\EndIf
\EndFor
\end{algorithmic}
\end{algorithm}

Algorithm~\ref{css-tree} is a  simple measure-uniform algorithm that repeatedly adds
all roots and leaves to
the maximal independent set (except 
for a leaf whose parent is a root)
and then removes
these nodes and their neighbors from the graph.
The algorithm obtained from the Simple Template by first running the 
MIS Rooted Tree \initalg{} followed by Algorithm~\ref{css-tree} is consistent
and
has round complexity
at most  $\lceil \frac{\eta_t}{2} \rceil + 5$.

Goldberg, Plotkin and Shannon~\cite{GPS88} give a fault-tolerant algorithm for coloring a rooted tree 
with 3 colors in $O(\log^*d)$ rounds. 
Consider the reference algorithm, $R$, whose first part is their algorithm and whose
second part is a two round algorithm that outputs an independent set using this coloring.
In round 1, all nodes send their colors to their neighbors, all nodes with 
color 1
output 1 and terminate,
and all nodes that have a neighbor with 
color 1
output 0 and terminate.
In round 2,
all active nodes with
color 2
send a message to
their neighbors with
color 3,
output 1 and terminate.
An active node with 
 color 3
that receives a message in 
round 2
outputs 0 and terminates;  otherwise, it outputs 1 and terminates.

\begin{corollary}
\label{tree-cor}
  Using the \para, with the MIS Rooted Tree Initialization Algorithm as the
initialization algorithm, $B$,
 Algorithm~\ref{css-tree} as the measure-uniform algorithm, $U$,
 and the two part reference algorithm, $R$,  described above, gives
an MIS algorithm with predictions for rooted trees that
has consistency~3,
has round complexity
$\min\{\lceil \frac{\eta_t}{2} \rceil + 5, O(\log^* d)\}$
and is $\frac{\eta_t}{2}$-degrading.
When $\log^*d \in O(\log^*n)$,
this algorithm is robust and, hence, smooth.
\end{corollary}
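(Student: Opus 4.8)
\noindent The plan is to obtain the corollary as an instance of Lemma~\ref{parallel}, verifying its hypotheses for the stated choice of ingredients and then adding a short direct argument for the explicit round-complexity bound. First I would pin down the roles. The reference algorithm $R$ has as its first part the $O(\log^* d)$-round rooted-tree $3$-coloring of Goldberg, Plotkin and Shannon~\cite{GPS88}, which is fault tolerant in the required sense: a proper $3$-coloring restricts to a proper $3$-coloring of any induced subgraph, and when a node that has terminated in $U$ is treated as crashed, each of its children simply becomes the root of its own component --- exactly the ``root of its component'' notion used in Algorithm~\ref{css-tree} --- so the surviving graph is again a forest of rooted trees on which the coloring runs correctly. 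Its round complexity $r_1\in O(\log^* d)$ depends only on $d$, which every node knows, and the second part of $R$ is the explicit two-round procedure described just above, so $r(n,\Delta,d)\in O(\log^* d)$. The measure-uniform algorithm $U$ is Algorithm~\ref{css-tree}: it names no graph parameters, so it is uniform, and (as recorded in the discussion of the Simple Template above) on a monochromatic rooted-tree component $S$ it terminates within $f(\mu_t(S))\le\lceil\mu_t(S)/2\rceil+1$ rounds, where $\mu_t$ is the measure ``$1$ plus the height.'' The initialization algorithm $B$ is the MIS Rooted Tree Initialization Algorithm, which is \good, runs in at most four rounds, and leaves the active nodes partitioned into monochromatic components. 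Finally, since $U$ produces an extendable partial solution at the end of every even-numbered round, I would take $r_1$ even; then no clean-up is required, so $C$ is the empty algorithm with $c'(n)=0$.

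With these choices all hypotheses of Lemma~\ref{parallel} hold, and --- the point that makes the proof clean --- the composition of $C$ with the second part of $R$ has constant round complexity ($0+2$), so the first alternative of Lemma~\ref{parallel} applies and steady progress of $U$ is not needed. Lemma~\ref{parallel} then gives immediately that the resulting algorithm is robust with respect to $R$ and is $f(\eta_t)$-degrading, where $\eta_t$ is the maximum of $\mu_t$ over the error components; since $f(\eta)\le\lceil\eta/2\rceil+1$, it is $\frac{\eta_t}{2}$-degrading. For consistency I would note that when $\eta=0$ the predictions form a maximal independent set, so no black node has a black parent and every white node has a black neighbor; hence $B$ terminates in three rounds and so does the whole algorithm. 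Together with the fact that a predicted maximal independent set can be checked in one round, this shows the algorithm has consistency~$3$ and is consistent.

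For the explicit round complexity I would case on whether $U$ finishes inside the parallel for-loop. The algorithm always runs $B$ (at most four rounds), the for-loop (at most $r_1$ rounds), the trivial $C$, and the two-round part~2 of $R$, for a total in $O(\log^* d)$. On the other hand, $U$ runs independently on each component of the subgraph active after $B$; by monotonicity of $\mu_t$ on rooted trees each such component is a monochromatic rooted subtree of some black or white component, hence has $\mu_t\le\eta_t$, so $U$ terminates within $\lceil\eta_t/2\rceil+1$ rounds. If that is at most $r_1$, the algorithm halts during the for-loop after at most $4+\lceil\eta_t/2\rceil+1=\lceil\eta_t/2\rceil+5$ rounds; otherwise $r_1<\lceil\eta_t/2\rceil+1$, so the $O(\log^* d)$ bound is itself $\lceil\eta_t/2\rceil+O(1)$. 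In all cases the round complexity is $\min\{\lceil\eta_t/2\rceil+5,\ O(\log^* d)\}$. Finally, when $\log^* d\in O(\log^* n)$ the algorithm runs in $O(\log^* n)$ rounds, and since every deterministic algorithm for the MIS problem requires $\Omega(\log^* n)$ rounds even on a line (the lower bound of~\cite{BBHORS21} applied with maximum degree $2$), $R$ is asymptotically optimal on rooted trees; hence the algorithm is robust, and being consistent, robust, and $\frac{\eta_t}{2}$-degrading with a linear (hence slowly growing) degradation function, it is smooth.

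The step I expect to be the main obstacle is the fault-tolerance bookkeeping for the parallel phase: making sure that as nodes drop out of the GPS coloring when they terminate in $U$, the remaining active nodes still induce a forest of rooted trees with well-defined roots, that GPS stores a valid $3$-coloring of that final subgraph, and that part~2 of $R$ then produces a maximal independent set of precisely that subgraph --- which, combined with the extendable partial solution already output by $B$ and $U$, yields a maximal independent set of the whole tree.
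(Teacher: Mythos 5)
Your proposal is correct and follows essentially the same route as the paper's proof: instantiate Lemma~\ref{parallel} with $r_1$ chosen even so no clean-up is needed, use the constant-round-complexity branch for the second part of $R$, and then do a direct two-case analysis (terminate inside the parallel loop vs.\ not) to get the explicit $\min\{\lceil\eta_t/2\rceil+5,\ O(\log^* d)\}$ bound, with robustness from the $\Omega(\log^* n)$ lower bound for MIS on a line. The only cosmetic difference is that you invoke the lower bound of~\cite{BBHORS21} with $\Delta=2$ where the paper cites Linial~\cite{L92} directly; both are valid.
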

\begin{proof}
  The MIS Rooted Tree Initialization Algorithm consists of 4 rounds, but it
  terminates in only 3 rounds if the predictions are correct.
We choose the upper bound, $r_1(d)$,  on the round complexity of the first part of $R$ to be even.
Then no clean-up is required, since the partial solution  output
at the end of every even numbered round of Algorithm~\ref{css-tree} is extendable.
Since the second part of $R$ has constant round complexity, by
Lemma~\ref{parallel}, the resulting algorithm with predictions is $\frac{\eta_t}{2}$-degrading and is robust with respect to $R$.
If the algorithm terminates during the first part, it performs at most $5+ \lceil \frac{\eta_t}{2} \rceil$ rounds.
Otherwise, $\lceil \frac{\eta_t}{2} \rceil > r_1(d)$ and it terminates in at most 2 additional rounds, for a total of at most $5+ r_1(d) + 2$ rounds.
Since   $r_1(d) \in O(\log^* d)$, the round complexity of this algorithm is $5 + \min\{\lceil \frac{\eta_t}{2} \rceil , O(\log^* d)\}$.

Any algorithm for solving the MIS problem on an $n$-node directed line (or ring) requires $\Omega(\log^* n)$ rounds~\cite{L92}.
If $\log^*d \in O(\log^*n)$, then $\Omega(\log^*d)$ rounds are required and $R$ is asymptotically optimal.
In this case, the resulting algorithm is robust and smooth.
\end{proof}

It is possible to obtain better algorithms with predictions for rooted trees than for general graphs.
For example, since $\eta_t \leq \miseta$,
the algorithm described in Corollary~\ref{tree-cor} is $\frac{\miseta}{2}$-degrading, whereas the  algorithm for general graphs described in Corollary~\ref{cor-parallel}  is $\miseta$-degrading. 
Because the round complexity of the reference algorithm used for rooted trees does not depend on $\Delta$, neither does
the round complexity of the resulting algorithm with predictions.

\section{Open Problems}
\label{section-open}

In general, it would be interesting to continue this work, considering other 
graph problems, other templates, other error measures, other classes of graphs, and possibly improving the
framework for distributed graph problems with predictions.
In particular, it would be interesting to further investigate 
the use of fault tolerant algorithms in the Parallel Template to design distributed graph algorithms with predictions.

A particularly interesting challenge would be to extend the work to
randomized algorithms.
The inherent difficulty is that if we use
an error measure based on the size of the largest remaining component,
we need to consider the expected maximum number of rounds over all components,
instead of the expected number of rounds in each (or the largest) component.

For example, consider Luby's randomized MIS algorithm~\cite{L86}, which has $O(\log n)$ expected round complexity.
If it is used as the reference algorithm in the Simple Template, the resulting algorithm with predictions has expected round complexity that is logarithmic in the sum
of the sizes of the error components, rather than logarithmic in $\miseta$, the size of the largest error component.
Although the expected round complexity for this reference algorithm to terminate any particular error component
is logarithmic in the size of that error component, the expected round complexity until the algorithm terminates on all error components can be considerably larger.
Specifically, suppose there are  $n/\log\log n$ error components, each of which is a path with $\log \log n$ nodes.
Of the $(\log \log n)!$ different possible orderings of the labels assigned to the nodes of a particular error component by Luby's algorithm,
there is one that is strictly increasing and one that is strictly decreasing, when 
going along the path.
In both these cases, the independent set chosen for this component consists of a single node and this node has a single neighbor.
So, with probability $2/(\log\log n)!$, a particular error component loses only one node in each of the first two rounds.
The probability that this happens in each subsequent iteration is at least as large, so the probability that the component loses exactly one node per round
is at least $(2/(\log\log n)!)^{(\log\log n)/2}$, which is much larger than $(\log \log n)/n$.
Thus, the expected number of such error components is at least one and the expected round complexity of the resulting algorithm with
predictions for this collection of error components is $\Theta(\log \log n)$, rather than $O(\log \miseta) = O(\log \log \log n)$.

In the area of online algorithms with predictions, there are many papers exhibiting trade-offs between consistency and robustness for various problems.
This includes
one of the very first papers in this area, by Kumar, Purohit, and Svitkina~\cite{KPS18} and the generalization by Lindermayr and Megow~\cite{LM22}. 
Algorithms of this type often have a parameter that can be adjusted to improve the consistency or improve the robustness, at the expense of the other.
Do such trade-offs exist for any problems in distributed graph algorithms with predictions? Are there interesting algorithms for some problems where the consistency is not determined using an initialization algorithm?

\section*{Acknowledgments}
We would like to thank Jukka Suomela for helpful discussion, including suggesting Lemma~\ref{css-vc} and its proof.

Joan Boyar and Kim S. Larsen were supported in part by the Independent Research Fund Denmark,
    Natural Sciences, grants DFF-0135-00018B and DFF-4283-00079B. Faith Ellen was supported in part by the Natural Science and Engineering Research Council of Canada grant RGPIN-2020-04178.

\bibliography{refs}
\bibliographystyle{plain}

\end{document}